\newcommand{\PD}[2]{\frac{\partial{#1}}{\partial{#2}}}
\newcommand{\VD}[2]{\frac{\delta{#1}}{\delta{#2}}}
\renewcommand{\d}{\mathrm{d}}
\newcommand{\app}[1]{\hyperref[app:#1]{Appendix~\ref*{app:#1}}}
\newcommand{\alg}[1]{\hyperref[alg:#1]{Algorithm~\ref*{alg:#1}}}
\theoremstyle{plain}
\newtheorem{theorem}{Theorem}[section]
\newtheorem{proposition}[theorem]{Proposition}
\newtheorem{lemma}[theorem]{Lemma}
\theoremstyle{definition}
\newtheorem{remark}[theorem]{Remark}
\newcommand\nps[1]{#1}
\newcommand{\eqn}[1]{\hyperref[eqn:#1]{(\ref*{eqn:#1})}}
\newcommand{\rem}[1]{\hyperref[rem:#1]{Remark~\ref*{rem:#1}}}
\newcommand{\thm}[1]{\hyperref[thm:#1]{Theorem~\ref*{thm:#1}}}
\newcommand{\cor}[1]{\hyperref[cor:#1]{Corollary~\ref*{cor:#1}}}
\newcommand{\defn}[1]{\hyperref[defn:#1]{Definition~\ref*{defn:#1}}}
\newcommand{\lem}[1]{\hyperref[lem:#1]{Lemma~\ref*{lem:#1}}}
\newcommand{\prop}[1]{\hyperref[prop:#1]{Proposition~\ref*{prop:#1}}}
\newcommand{\fig}[1]{\hyperref[fig:#1]{Figure~\ref*{fig:#1}}}
\newcommand{\tab}[1]{\hyperref[tab:#1]{Table~\ref*{tab:#1}}}
\newcommand{\algo}[1]{\hyperref[algo:#1]{Algorithm~\ref*{algo:#1}}}
\renewcommand{\sec}[1]{\hyperref[sec:#1]{Section~\ref*{sec:#1}}}
\newcommand{\append}[1]{\hyperref[append:#1]{Appendix~\ref*{append:#1}}}
\newcommand{\fac}[1]{\hyperref[fac:#1]{Fact~\ref*{fac:#1}}}
\newcommand{\lin}[1]{\hyperref[lin:#1]{Line~\ref*{lin:#1}}}
\newcommand{\fnote}[1]{\hyperref[fnote:#1]{Footnote~\ref*{fnote:#1}}}
\title{Differentiable Analog Quantum Computing for Optimization and Control}
\begin{document}

%

\author{%
Jiaqi Leng$^{\dagger 1,3}$ \ \ \ Yuxiang Peng$^{\dagger 1,2}$\ \ \ Yi-Ling Qiao$^{\dagger 2,4}$ \ \ \ Ming C. Lin$^{2,4}$ \ \ \ Xiaodi Wu$^{1,2}$ \\
\small{$^1$Joint Center for Quantum Information and Computer Science, University of Maryland}\\
\small{$^2$Department of Computer Science, University of Maryland} \\
\small{$^3$Department of Mathematics, University of Maryland} \\
\small{$^4$Center for Machine Learning, University of Maryland} \\
\small{$^\dagger$Equal Contribution} \\
}

\maketitle

\begin{abstract}
We formulate the first differentiable analog quantum computing framework with a specific parameterization design at the analog signal (pulse) level to better exploit near-term quantum devices via variational methods. 
We further propose a scalable approach to estimate the gradients of quantum dynamics using a forward pass with Monte Carlo sampling, which leads to a quantum stochastic gradient descent algorithm for scalable gradient-based training in our framework. 
Applying our framework to quantum optimization and control, we observe a 
significant advantage of differentiable analog quantum computing against SOTAs based on parameterized digital quantum circuits by {\em orders of magnitude}. 
\end{abstract}

\section{Introduction}
Quantum computing has promised unprecedented improvement in our computational ability to tackle classically intractable problems.
Despite of the rapid development of quantum hardware~\cite{google-supremacy, Zhong1460}, 
near-term quantum computers are still likely to have very restricted hardware resources, where the limited number of ``qubits''
and non-negligible machine noises would impede the implementation of large-scale quantum algorithms. 
Recent research results in both computer science~\cite{IEEE-pick} and physics~\cite{PRXQuantum.2.010101} suggest a promising approach of designing resource-efficient \textit{Noisy Intermediate-Scale Quantum (NISQ)}~\cite{preskill2018quantum} applications by breaking quantum circuit abstractions
and directly designing applications at the pulse-level control of quantum machines.\footnote{Pulse-level control is available on IBMQ computers.}
The benefits of this analog-oriented approach
have been witnessed in the history of classical analog computing  
that predates digital computing due to relaxed hardware requirement and plays an important role in domain applications such as simulation.

One leading algorithmic paradigm on NISQ computers is the \textit{Variational Quantum Algorithm (VQA)} with a few prominent examples like the Variational Quantum Eignensolver (VQE)~\cite{peruzzo2014variational}, quantum approximate optimization algorithm (QAOA)~\cite{farhi2014quantum}, and more  in~\cite{benedetti2019parameterized}. 
Specifically, VQA uses \emph{parametrized quantum models} to characterize loss functions, in particular those from quantum physics, which are potentially intractable for classical computing~\cite{harrow2017quantum}. 
These parameters will then be optimized, usually through gradient-based approaches, to minimize the given loss function. 
Conventionally, parameterized quantum models are typically quantum circuits where each gate is parameterized by classical variables. 
Moreover, auto-differentiation techniques have been recently developed (e.g.,~\cite{mitarai2018quantum, Schuld2019EvaluatingAG, zhu2020principles}) for a scalable quantum gradient calculation on parameterized quantum circuits. 
We henceforth refer this existing framework as \emph{differentiable digital quantum computing}.

Although parameterized quantum circuits are designed for NISQ applications, implementing (digital) quantum circuits still incurs non-negligible overheads, which significantly restricts the size of parametrized circuits that can be executed faithfully on NISQ machines. 
Moreover, the current parameterization in VQAs is also largely restricted by available parameterized gates and how they compose circuits, which in turn limits the expressive power of VQAs. 
A natural alternative to the current digital parameterization is to perform VQA directly on \emph{parametrized analog signals (pulses)}, either on digital quantum machines with pulse-level controls, {\em or} on general analog quantum hardware. 
Indeed, parameterized analog pulses have the potential for more efficient NISQ implementation and better expressiveness as suggested in a recent perspective paper~\cite{PRXQuantum.2.010101}, which could be a more favorable parameterization for NISQ applications even when digital quantum gates are available.

However, there is not yet a systematic study of the analog parameterization for VQAs, its scalable training, and its quantitative benefits in achieving  high-quality solutions in VQA applications. 

\begin{wraptable}{r}{8.5cm}
\vspace{-0.8em}
\newcolumntype{Z}{S[table-format=4.1,table-auto-round]}
\centering
\ra{1.1}
\resizebox{1\linewidth}{!}{
	\small
	\begin{tabular}{l|ccc}
		\toprule
		 & \textbf{Diff. Quantum} & Neural ODE & Diff. Physics    \\
		\hline
		$\frac{\d }{\d t}\xx(t)=?$ & $-iH(\vv, t) \xx(t)$    &  $f(\xx(t), \vv)$  & $f(\xx(t), \vv)$   \\
		Parametrization & Basis Function   &  Neural Network  & Dynamics Equation \\
		Forward & Time Evolution & Forward Pass & Time Integration \\
		Backward & MC Integration & Autodiff & Auto/Manual Diff.  \\
		Device & Quantum & Classical  & Classical  \\
		\bottomrule
	\end{tabular}
}
\caption{Machine learning in different dynamical systems. A diagram in \append{diagram} illustrates their connections.}
\label{tab:neuralode}
\vspace{-0.7em}
\end{wraptable}

\noindent \textbf{Contributions.} 
We conduct the first systematic study of the parameterization and scalable training of analog VQAs, which we call \emph{differentiable analog quantum computing}. 
We also leverage our scalable training to demonstrate the quantitative benefits of the analog parameterization for specific VQA applications. 

Specifically, we formulate the general differentiable analog quantum computing as a control problem,
$\min_{\vv} \mathcal{L}(\vv)$,
where the loss function $\mathcal{L}(\vv)$ is calculated from the $\vv$-parametrized quantum state $\xx(T;\vv)$ generated by quantum machines at the evolution time $t = T$. 
Intuitively, this control problem is no different from any classical ones except that the evolution of the quantum state in $[0,T]$ is governed by the Schr\"odinger equation
\begin{equation}
    \frac{\d}{\d t}\xx(t) = -iH(\vv, t) \xx(t),
\label{eqn:ours_sch}
\end{equation}
where the Hamiltonian operator $H(\vv,t)$ can be much more flexibly parameterized in $\vv$ compared with parameterized quantum circuits (detailed in~\sec{formulation}), and $i$ is the imaginary unit.

We also develop a scalable Monte Carlo integration technique of computing quantum-related gradients from the loss function $\mathcal{L}(\vv)$. 
A well-known difficulty in computing quantum-related gradients by classical means is the exponential cost associated with classical simulation of the quantum system. 
Thus, any scalable solution must leverage quantum machines to compute the gradients for themselves. 
Existing ``auto-differentiation'' techniques for parameterized quantum circuits (e.g.,~\cite{mitarai2018quantum, Schuld2019EvaluatingAG, zhu2020principles}) are designed for discrete-time evolution and the basic parameterized units (i.e., gates)  are described by finite-dimensional matrices. Differential analog quantum computing exploits parameterization of continuous-time evolution and $H(\vv,t)$ refers to a parameterized model from a functional space. 
Our Monte Carlo integration technique is designed to bridge this technical gap, which is later integrated with the 
stochastic gradient descent (SGD) for the entire framework for fast convergence and robustness against empirical noise.
We further illustrate that our quantum stochastic gradient descent could still work with approximate descriptions of $H(\vv,t)$ for domain applications. 

An analogy could be drawn between our approach and classical deep learning, as summarized in~\tab{neuralode}.
\emph{Neural ODEs}~\cite{lu2018beyond,chen2018neural} 
view the structure of ResNet~\cite{he2016deep}, $\xx_{n+1} = \xx_{n} + f(\xx_{n},\theta)$, as the solving of an ordinary differential equation,
\begin{equation}
    \frac{\d }{\d t}\xx(t)=f(\xx(t), \vv),
\end{equation}
with $f(\cdot)$ as the network layer, $\vv$ the network parameter, and $\xx(t)$ the hidden state. This formula is similar to our system in~\eqn{ours_sch}, 
although we adopt a different parametrization than neural networks. 
Similarly, \emph{differentiable physics} (e.g.,~\cite{hu2019difftaichi}), which incorporates physical simulation into learning process,
leverages existing numerical solvers and the autodiff functionality of deep learning with backpropagation to compute gradients of a physical or dynamical system, then integrates them into a neural network. 
This approach has proven to accelerate learning and generalization.

Differentiable analog quantum computing can be deemed as a special form of differentiable physics at the quantum scale. 
Similar to the promise of differentiable physics or neural ODEs, we have also observed the advantageous performance of differentiable analog quantum computing against the conventional parameterized quantum circuits by orders of magnitude in major VQA applications like optimization (\sec{opt}) and control (\sec{ctrl}). 
Our auto-differentiation technique in quantum is however less efficient than classical ones that leverage the chain rule and the forward/backward propagation. 
This is due to the quantum no-cloning theorem \cite{wootters1982single} which prevents us from storing any arbitrary immediate state during quantum computation.
So, we develop a sampling technique to compute gradients.  To sum up, the key contributions of this work include:
\begin{itemize}[leftmargin=2mm]
\vspace*{-0.45em}
\item A formulation of differentiable, resource-efficient {\em analog} quantum computing framework (Sec.~\ref{sec:DAQC}),
\vspace*{-0.25em}
\item A scalable technique of computing gradients by Monte Carlo sampling with SGD (Sec.~\ref{sec:SGD}),
\vspace*{-0.25em}
\item Formal analysis on correctness, efficiency, and robustness that showcases {\em exponential reduction of computational complexity} and bounded errors with approximate 
machine description (Sec.~\ref{sec:grads}-\ref{sec:robust}),
\vspace*{-0.25em}
\item Applications of our framework on quantum optimization (Sec.~\ref{sec:opt}) and control (Sec.~\ref{sec:ctrl}), with demonstrated advantages by {\em orders of magnitude} against parameterized quantum circuits. Our code is available here: \url{https://github.com/YilingQiao/diffquantum}
\vspace*{-0.25em}
\end{itemize}

\section{Related work}\label{related_work}
\mypara{Learning for dynamics and control}. Dynamical systems have widely been used to interpret and improve the design of machine learning algorithms. Compared to traditional discrete layers, Neural ODEs~\cite{haber2017stable} demonstrate that continuous modeling of neural network can better learn the continuous structures~\cite{Greydanus2019hamiltonian,Rubanova2019latent} with infinite depth~\cite{Massaroli2020dissecting} and constant memory cost~\cite{zhuang2021mali}. 
Besides neural networks, differentiable physics~\cite{Degrave2019} also computes analytic gradients of classical dynamical systems like rigid body~\cite{Belbute2018,Qiao2020Scalable}, articulated body~\cite{ha2017joint,werling2021fast,Qiao2021Efficient}, soft body~\cite{du2021diffpd,Geilinger2020add,Qiao2021Differentiable}, and fluids~\cite{Um2020solver,wandel2021learning,Holl2020,takahashi2021differentiable}. Those methods have made great success on design~\cite{ma2021diffaqua}, control~\cite{heiden2021disect} and system identification~\cite{murthy2021gradsim} tasks in the macroworld. Our framework, as the first differentiable dynamics for quantum computing, could have tremendous potential in chemistry and physics applications.

\mypara{Quantum machine learning \& optimal control}. 
Quantum machine learning is a fast-developing emerging field (e.g., see the survey~\cite{biamonte2017quantum}) where variational quantum algorithms (VQAs) (e.g., see the survey~\cite{benedetti2019parameterized}) are one of the most promising candidates for NISQ applications. 
Quantum optimal control (succinctly, quantum control) aims to achieve a desired response from the quantum system by controlling the system parameters~\cite{dong2010quantum,brif2010control}. Quantum control theory has empowered the growth of quantum technologies and found applications in several fields, ranging from molecular chemistry to quantum computing~\cite{rabitz2000whither,dalgaard2020global}. 
The connection between quantum control theory and VQAs has been discussed recently~\cite{PRXQuantum.2.010101,VQE-Gate-Free}. 
Several optimization algorithms have been developed to solve quantum control problems, including the Krotov method~\cite{palao2003optimal}, monotonically convergence algorithms~\cite{zhu1998rapidly}, non-iterative algorithms~\cite{zhu1999noniterative}, the \textit{GRadient Ascent Pulse Engineering (GRAPE)} algorithm~\cite{khaneja2005optimal}, the \textit{Chopped RAndom-Basis (CRAB)} algorithm~\cite{caneva2011chopped}, etc. The development of quantum computing also opens up the possibility of solving quantum control problems with quantum computers~\cite{dive2018situ,castaldo2021quantum}. These proposals take the approach of hybridizing quantum simulations with classical optimization routines. 
Nevertheless, they either do not support gradient-based methods or compute the gradients in a non-scalable way (e.g., via classical simulation), 
which significantly limits their performance especially comparing to ours. \nps{Existing pulse-level variational algorithms \cite{VQE-Gate-Free,liang2022variational,de2022pulse} do not discuss their direct application to quantum analog machines and can not compute gradients analytically on quantum machines, while our paper resolves this problem.}

\section{Differentiable Analog Quantum Computing}
\label{sec:DAQC}

Similar to classical dynamical systems, quantum systems also have states, governing equations, and observations, so there naturally exist plenty of optimization~\cite{moll2018quantum}, control~\cite{d2007introduction}, and learning~\cite{biamonte2017quantum} problems for quantum computing. 
Inspired by the success of gradient-based approach in the classical domain, 
we propose a differentiable framework to compute the gradients of parametrized analog control signals on quantum computers, 
based on a ``{\em forward simulation} with {\em stochastic sampling}'' (see the workflow in \fig{workflow}).

\begin{figure}[t]
    \centering
    \includegraphics[width=1\linewidth]{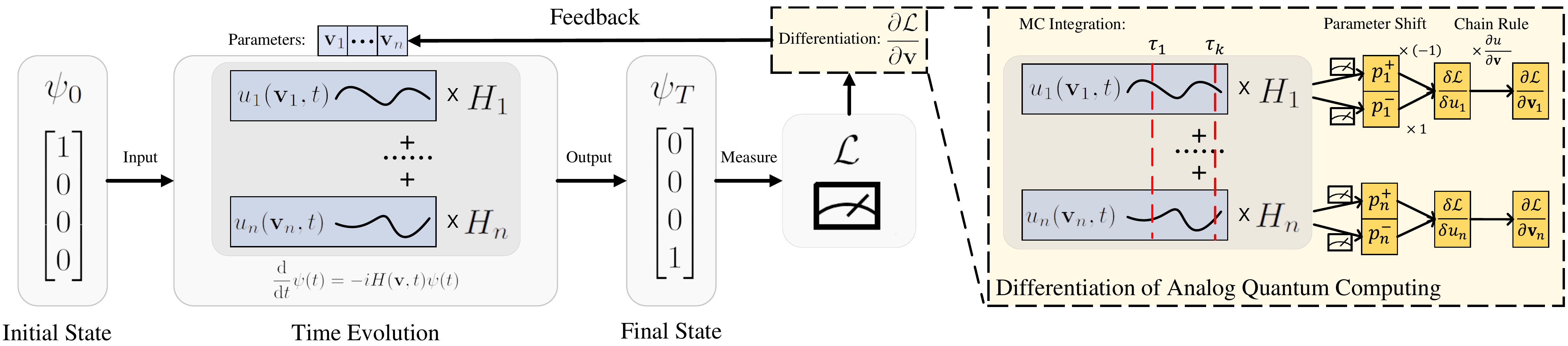}
    \caption{{\bf The workflow of differentiable analog quantum computing.} In this $2$-qubit example, the system starts with \nps{an} initial ground state $\psi_0$ of dimension $4=2^2$ and evolves through the time following the Schr\"odinger equation~\eqn{ours_sch}. We control the dynamics of this quantum system by specifying the time-dependent Hamiltonian $H(\vv, t)$, parameterized by trainable variables $\vv$. In the end of the process, we measure the system and get a real-valued loss value, $\mathcal{L}$. The derivatives are computed as in the right box. Quantum computers cannot store computational graphs, so we propose to sample a time $t$ in the forward simulation and apply the parameter shift rule to compute the gradients. The derivatives are then used in the feedback loop to update $\vv$ optimizing $\mathcal{L}$.}
    \label{fig:workflow}
\end{figure}

\subsection{Quantum preliminaries}
A \textit{qubit} (or \textit{quantum bit}) is the analogue of a classical bit in quantum computation. It is a two-level quantum-mechanical system described by vectors in the Hilbert space $\mathbb{C}^2$. We use Dirac notation $\ket{\psi}$ to denote quantum states (i.e., unit vectors) $\psi$ in $\mathcal{H}$. 
For example, the classical ``0'' and ``1'' are represented by $\ket{0}\!=\!\left[\begin{matrix} 1 \\ 0 \end{matrix}\right]$ and $\ket{1}\!=\!\left[\begin{matrix} 0 \\ 1 \end{matrix}\right]$. 
One qubit states could be in any linear combination of $\ket{0}$ and $\ket{1}$, which is called \emph{superposition}. 
An $n$-qubit state is a unit vector in the Kronecker tensor product $\otimes$ of $n$ single-qubit Hilbert spaces, i.e., $\mathcal{H} = \otimes_{i=1}^n \mathbb{C}^2 \cong \mathbb{C}^{2^n}$, whose dimension is exponential in $n$. 
For an $n$ by $m$ matrix $A$ and a $p$ by $q$ matrix $B$, their Kronecker product is an $np$ by $mq$ matrix where $(A\otimes B)_{pr+u, qs+v}=A_{r, s}B_{u, v}.$ 
The \textbf{complex conjugate transpose} of $\ket{\psi}$ is denoted as $\bra{\psi}$\nps{$=\ket{\psi}^{\dagger}$ ($\dagger$ is the Hermitian conjugate)}. Therefore, the \textbf{inner product} of $\phi$ and $\psi$ could be written as $\braket{\phi}{\psi}$.

The time evolution of quantum states under Sch\"odinger equation is specified by the (time-dependent) Hermitian matrix $H(t)$ over the corresponding Hilbert space, known as the \textit{Hamiltonian} of the quantum system. 
Typical single-qubit Hamiltonians include the famous \textit{Pauli matrices}:
\begingroup
\setlength\arraycolsep{3pt}
\begin{align}
\label{eqn:pauli}
    I=\left[\begin{matrix} 1 & 0 \\ 0 & 1\end{matrix}\right], ~~X=\left[\begin{matrix} 0 & 1 \\ 1 & 0\end{matrix}\right], ~~Y=\left[\begin{matrix} 0 & -i \\ i & 0\end{matrix}\right], ~~Z=\left[\begin{matrix} 1 & 0 \\ 0 & -1\end{matrix}\right].
\end{align}
\endgroup
Similarly, a multi-qubit Hamiltonian can be built from the Pauli group consisting of tensor products of Pauli matrices. Conventionally we write $X_j$ ($Y_j$, $Z_j$) for a multi-qubit Hamiltonian to indicate the tensor product of multiple identity matrices while the $j$-th operand is $X$ ($Y$, $Z$), which represents operations on the $j$-th subsystem. 

\textit{Quantum measurement} refers to the procedure of extracting classical information from quantum systems. It is characterized by an Hermitian matrix $M$ called the \textit{observable}. Measuring a quantum state $\ket{\psi}$ with observable $M$ is modelled as a random variable whose expectation value is $\bra{\psi}M\ket{\psi}$.

\subsection{Problem formulations} \label{sec:formulation}

Most computational tasks in quantum simulation and control, like finding the ground state of a physics system, can be formulated as the following optimization problem. Given a quantum observable $M$ and an initial state $\ket{\psi(0)} = \ket{\psi_0}$, we seek for a parameter vector $\vv$ by minimizing the loss function
\begin{equation}\label{eqn:loss}
    \mathcal{L} = \bra{\psi(T)}M\ket{\psi(T)},
\end{equation}
where the evolution of $\ket{\psi(t)}$ from $t=0$ to $t=T$ subject to the Schr\"odinger equation \eqn{ours_sch}.
Here, $H(\vv, t)$ is a Hamiltonian parametrized by $\vv$ with form
\begin{equation}\label{eqn:parametrized-ham}
    H(\vv,t)=H_c+\sum\nolimits_{j=1}^m u_j(\vv,t)H_j,
\end{equation}
where $m$ is the number of control pulses, $H_c$ is a time-independent Hamiltonian (e.g. Pauli matrices), $H_j$ are tensor products of Pauli matrices, and the range of $u_j(\vv, t)$ is $\mathbb{R}.$
We also require $u_j(\vv, t)$ to be differentiable with respect to $\vv$ for any $t\in[0, T]$.
The loss function $\mathcal{L}$ is hence {\em differentiable}.
We can loosen the restriction of $H_j$ to general time-independent Hamiltonians if we have powerful enough quantum simulators. 
We keep it in this paper for the convenience of presenting our algorithm.

\nps{With specific $M$ and $\ket{\psi_0}$, optimization problem $\min_{\vv}\mathcal{L}$ with post-processing can encode many essential classical and quantum problems. For example, any classical optimization of $f(x)$ over $n$-bit integers $x$ can be formulated as $M=\sum_x f(x)\ket{x}\bra{x}.$ More examples like the Max-Cut problem and quantum state preparation are discussed in details in Sec.~\ref{sec:opt},~\ref{sec:ctrl}.}

\mypara{Abstract Quantum Analog Machines (AQAMs).}
We propose AQAMs as our computational model. An AQAM optimizing the above loss function should be capable of consecutively: (1) evolving under $H(\vv, t)$ for any $\vv$ and time interval $[t_1, t_2]\subset[0, T]$; (2) \nps{evolving under constant Hamiltonian $H_j$ for time duration $\tau$ (effectively applying unitary transformation $e^{-iH_j\tau}$);} (3) preparing $\ket{\psi_0};$ (4) measuring with observable $M$. For realistic quantum devices, we need to design AQAMs accordingly.

\begin{remark}
\label{rem:triv-aqam}
A trivial AQAM directly employs the Hamiltonian of a quantum device as $H(\vv, t)$, parameterized by tunable pulses. In most realistic quantum devices, multi-qubit interactions are not tunable and weak compared to tunable single-qubit Hamiltonians. Thus $H_j$ can be simulated with high fidelity. Our method is robust against imprecise simulations of $H(\vv, t)$ and $H_j$ (discussed in \sec{robust}). Hence the trivial AQAMs are usually suitable for realistic quantum devices. 
\end{remark}

Our formulation of the problem via analog quantum computing is a generalization of the formulation via parameterized circuits. For example, a series of parameterized Pauli rotation gates $R_{P_j}(\theta_j)=\exp{-i(\theta_j/2)P_j}$ can be realized by $H(\vv, t)=\sum_j \vv_j \mathbf{1}_j(t)P_j$ with valuation $\vv_j=\theta_j/2$, where $\mathbf{1}_j$ is the indicator function of $[j, j+1]$. However, simulating analog quantum computing via quantum circuits requires much longer time on nowadays devices \cite{PRXQuantum.2.010101}, hence is unrealistic. So direct analog quantum computing can \emph{exploit near-term quantum devices more efficiently} than quantum circuits.

\subsection{Quantum stochastic gradient descent}
\label{sec:SGD}

\begin{wrapfigure}{r}{0.5\linewidth}
\begin{minipage}{\linewidth}
\vspace{-0.6cm}
\begin{algorithm}[H]
\begin{small}
\caption{Estimating gradients on an AQAM}
\label{algo:sgd}
\vspace{5pt}
\textbf{Classical inputs:} $b_{\text{int}}, b_{\text{obs}}$ (batch sizes), $m$ (number of control pulses), $u_j$ (parametrized pulses), $T$ (evolution duration), $\vv$ (parameters) \\
\textbf{Quantum inputs:} $\mathcal{E}_{\ket{\psi_0}}$ (preparation of initial state), $H$ (parametrized system Hamiltonian), $H_j$ (pulse-controllable Hamiltonian),
$\mathcal{E}_M$ (measurement procedure with observable $M$) \\
\textbf{Output:} a gradient estimation of $\mathcal{L}$ to $\vv$
\vspace{7pt}
\hrule
\vspace{5pt}
\begin{algorithmic}

\For{$k\in\{1, ..., b_{\text{int}}\}$}
    \State Draw $\tau\sim\text{Uniform}(0, T)$
    \For{$j\in\{1, ..., m\}, s\in\{-1, +1\}$}
        \For{$l\in\{1, ..., b_{\text{obs}}\}$}
            \State Prepare $\ket{\psi_0}$ via $\mathcal{E}_{\ket{\psi_0}}$
            \State Evolve under $H(\vv, t)$ for time $[0, \tau]$
            \State Evolve under $H_j$ for duration $(1+\frac{3}{4}s)\pi$.
            \State Evolve under $H(\vv, t)$ for time $[\tau, T]$
            \State $x_{l}\leftarrow$ observation sample from $\mathcal{E}_M$.
        \EndFor
        \State $\tilde{p}_j^{\text{sign}(s)}\leftarrow\frac{1}{b_{\text{obs}}}\sum_{l=1}^{b_{\text{obs}}} x_l$
    \EndFor
    \State $\tilde{f}_k\leftarrow \sum_{j=1}^m \PD{u_j}{\vv}(\vv, \tau)(\tilde{p}_j^{-}-\tilde{p}_j^{+})$
\EndFor
\State $\widetilde{\frac{\partial \mathcal{L}}{\partial \vv}}\leftarrow\frac{T}{b_{\text{int}}}\sum_{k=1}^{b_{\text{int}}}\tilde{f}_k$
\end{algorithmic}
\vspace{5pt}
\end{small}
\end{algorithm}
\vspace{-1cm}
\end{minipage}
\end{wrapfigure}

Our quantum SGD scheme for computing gradients on AQAMs is illustrated in this section, with its correctness, efficiency, and robustness discussed in the following sections. 

We borrow the idea of mini-batches from classical SGD to deal with the gradients, and set two layers of mini-batches: (1) an integration mini-batch with size $b_{\text{int}}$; (2) an observation mini-batch with size $b_{\text{obs}}$. The integration mini-batch updates parameters according to the estimation of derivatives on the sampled time. The observation mini-batch repeats experiments to generate more precise measurement results. The scheme is displayed in \algo{sgd}.

The forward and backward propagation of our SGD scheme is depicted in \fig{workflow}. Notice that the inner loop is the only procedure on quantum machine. The difference of this procedure compared with the estimation of loss function $\mathcal{L}$ is the inserted evolution under $H_j$, which is beneficial in the error analysis in \sec{robust}.

With the estimation of gradients, various optimizers designed for classical stochastic gradient descent are suitable to optimize the objective function. For example, Adam~\cite{kingma2014adam} is used in our experiments.

\subsection{Correctness of gradient estimation}
\label{sec:grads}

We show that \algo{sgd} generates an unbiased estimation of the gradient $\PD{\mathcal{L}}{\vv}$, and hence establishes its correctness. 
The proof of the following theorem is postponed to \append{gradient}.

\begin{theorem}
\label{thm:vd}
The derivative of $\mathcal{L}$ to parameters $\vv$ is
\begin{align}
\label{eqn:deriv}
\PD{\mathcal{L}}{\vv}=\int\nolimits_0^T \d\tau~ \sum\nolimits_{j=1}^m \PD{u_j}{\vv}(\vv, \tau) \left(p_j^{-}(\tau)-p_j^{+}(\tau)\right).
\end{align}
Here, $p_j^{\pm}(t)=\bra{\psi_0}U_{\vv}^{\dagger}(T, t)e^{iH_j(1\pm\frac{3}{4})\pi}U^{\dagger}_{\vv}(T, t) M U_{\vv}(T, t)e^{-iH_j(1\pm\frac{3}{4})\pi}U_{\vv}(t, 0)\ket{\psi_0}$, where $U_{\vv}(t_2,t_1)$ denotes the time evolution operator for time interval $[t_1,t_2]$ under Hamiltonian $H(\vv, t)$.
\end{theorem}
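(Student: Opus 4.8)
The plan is to reduce the statement to two ingredients: a first-order variation (Duhamel) formula for the time-ordered propagator, and a ``parameter-shift'' identity that rewrites the resulting commutator as a difference of two shifted expectation values. Throughout I would use the composition law $U_\vv(T,0)=U_\vv(T,\tau)U_\vv(\tau,0)$, unitarity $U_\vv^\dagger(s,0)=U_\vv(0,s)$, and write the loss as $\mathcal{L}=\bra{\psi_0}U_\vv^\dagger(T,0)MU_\vv(T,0)\ket{\psi_0}$.

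First I would differentiate the propagator. Setting $V(t)=\PD{U_\vv(t,0)}{\vv}$ and differentiating the Schr\"odinger equation $\partial_t U_\vv(t,0)=-iH(\vv,t)U_\vv(t,0)$ with respect to $\vv$ gives the inhomogeneous linear ODE $\partial_t V=-i(\partial_\vv H)U_\vv-iHV$ with initial condition $V(0)=0$. Solving by variation of parameters (the ansatz $V=U_\vv C$ turns the equation into $\partial_t C=-iU_\vv^\dagger(\partial_\vv H)U_\vv$) and using $U_\vv(T,0)U_\vv(0,\tau)=U_\vv(T,\tau)$ yields $\PD{U_\vv(T,0)}{\vv}=-i\int_0^T U_\vv(T,\tau)\,\partial_\vv H(\vv,\tau)\,U_\vv(\tau,0)\,\d\tau$. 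Since only the pulses carry $\vv$, the form \eqn{parametrized-ham} gives $\partial_\vv H(\vv,\tau)=\sum_j\PD{u_j}{\vv}(\vv,\tau)H_j$. Differentiability of $u_j$ in $\vv$ on $[0,T]$ licenses differentiating under the integral.

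Next I would substitute this into $\PD{\mathcal{L}}{\vv}=\bra{\psi_0}(\partial_\vv U_\vv^\dagger)MU_\vv\ket{\psi_0}+\bra{\psi_0}U_\vv^\dagger M(\partial_\vv U_\vv)\ket{\psi_0}$. Because $\mathcal{L}$ is real and $M,H_j$ are Hermitian, the two summands are complex conjugates, so $\PD{\mathcal{L}}{\vv}$ equals twice the real part of the second term; as that term carries an overall factor $-i$, it becomes $\int_0^T\sum_j\PD{u_j}{\vv}(\vv,\tau)\,2\operatorname{Im}\bra{\chi}\tilde M H_j\ket{\chi}\,\d\tau$, where I abbreviate $\ket{\chi}=U_\vv(\tau,0)\ket{\psi_0}$ and $\tilde M=U_\vv^\dagger(T,\tau)MU_\vv(T,\tau)$. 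Using $(\tilde M H_j)^\dagger=H_j\tilde M$, the integrand collapses to the commutator $2\operatorname{Im}\bra{\chi}\tilde M H_j\ket{\chi}=-i\bra{\chi}[\tilde M,H_j]\ket{\chi}$.

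Finally I would invoke the parameter-shift identity to match this commutator with $p_j^-(\tau)-p_j^+(\tau)$. Here the crucial structural fact is that each $H_j$ is a tensor product of Pauli matrices, hence an involution $H_j^2=I$, so $e^{-iH_j\theta}=\cos\theta\,I-i\sin\theta\,H_j$. Expanding $e^{iH_j\theta}\tilde M e^{-iH_j\theta}=\cos^2\theta\,\tilde M+\sin^2\theta\,H_j\tilde M H_j-i\sin\theta\cos\theta\,[\tilde M,H_j]$ and evaluating at the shifts $\theta_\pm=(1\pm\tfrac34)\pi$ (for which $\cos^2\theta_\pm=\sin^2\theta_\pm=\tfrac12$ while $\sin\theta_\pm\cos\theta_\pm=\mp\tfrac12$), the $\theta$-independent pieces cancel in the difference and leave $p_j^-(\tau)-p_j^+(\tau)=-i\bra{\chi}[\tilde M,H_j]\ket{\chi}$, exactly the integrand obtained above. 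Summing over $j$ and integrating in $\tau$ gives \eqn{deriv}. The routine-but-delicate step is a clean justification of the Duhamel formula (differentiating a non-autonomous time-ordered propagator, where $H(\vv,t)$ need not commute with itself at different times). The genuinely load-bearing step, however, is the parameter-shift identity, which succeeds \emph{only} because the shift generators square to the identity and the angles $(1\pm\tfrac34)\pi$ are tuned to cancel the diagonal terms; this is where I expect the main care to be required and is the feature that makes $p_j^\pm$ realizable as physical expectation values on the AQAM.
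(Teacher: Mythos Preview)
Your proposal is correct and follows essentially the same route as the paper: a Duhamel/variation-of-parameters formula for $\partial_\vv U_\vv(T,0)$ (the paper obtains it via the finite difference $U_\uu-U_\vv$, you via the ansatz $V=U_\vv C$, but these are the same argument), substitution into $\partial_\vv\mathcal{L}$ to obtain the commutator integrand $-i\bra{\chi}[\tilde M,H_j]\ket{\chi}$, and then the parameter-shift identity exploiting $H_j^2=I$. In fact you spell out the parameter-shift computation at the angles $(1\pm\tfrac34)\pi$ more explicitly than the paper, which simply cites the rule; your identification of the involution property of $H_j$ as the load-bearing hypothesis is exactly right.
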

One can interpret the above formula as a direct application of the chain rule over functional derivatives $\VD{\mathcal{L}}{u_j}$ and partial derivatives $\PD{u_j}{\vv}$, since $\VD{\mathcal{L}}{u_j}(\vv, t)=p_j^{-}(t)-p_j^{+}(t)$ by the parameter shift rule \cite{mitarai2018quantum, Schuld2019EvaluatingAG}\nps{, which is a technique for evaluating commutators of Hermitian by quantum processes}.

\algo{sgd} estimates the integral in \eqn{deriv} via Monte Carlo integration (MCI) technique. We prove that the sampling of MCI has finite variances in \append{gradient} when $u_j(\vv, t)$ has bounded derivatives to $\vv$. Hence MCI converges of rate $O(b_{\mathrm{int}}^{-\frac{1}{2}})$ \cite{press2007numerical}. Other numerical integral methods are also applicable here for different conditions on $u_j$ and $H_j$, and may have better convergence rates than MCI. We present it here because MCI has good generalization and simplicity. We also remark that similar ideas developed in this section have also appeared in~\cite{Banchi2021measuringanalytic} in the circuit model, whereas we develop everything in the analog quantum computing model.

Overall, the forward and backward propagation of differentiation of $\mathcal{L}$ are depicted in \fig{workflow}. Since the parameterization of $u_j$ is arbitrary, a typical treatment is using a Fourier basis or a Legendre basis as the support of the parameterization. Neural networks are also suitable for pulse generation, whose gradients are easy to compute via backpropagation.

\subsection{Scalability analysis}
\label{sec:scalab}

The resource consumption of our classical-quantum hybrid approach has two aspects: the classical and the quantum sides. The classical computation, as described in \algo{sgd}, has $O(b_{\text{int}} b_{\text{obs}} m)$ numerical calculations. On the quantum side, the sampling process assessing the loss function and its derivatives takes $O(T)$ time each. The total running time on a quantum computer then is $O(b_{\text{int}} b_{\text{obs}} m T).$ Almost on every architecture of quantum devices, the number of control signals $m$ is {\em at most quadratic in the number of qubits $n$} \nps{(see some survey papers~\cite{kjaergaard2020superconducting, huang2020superconducting, bruzewicz2019trapped, saffman2016quantum})}, 
showing excellent scalability of our approach. 
We exhibit the scalability of our method for up to 11 qubits in numerical experiments in \sec{qaoa}.
Our approach could in principle be readily applied to the most advanced existing quantum systems (e.g.,~\cite{arute2019quantum} with $n\sim 60$). 
This is in sharp contrast to the case of GRAPE and CRAB algorithms, which rely on classical simulation of quantum systems with an exponential cost in $n$.
For instance, the associated classical computation cost for $n\sim60$ (i.e. at least $2^{60}$) is prohibitively high, almost reaching the limit of supercomputers today. {\em Our approach makes it feasible with only $O(n^2)$ complexity}.

\subsection{Robustness analysis}
\label{sec:robust}

In this section, we analyze systematic errors of gradient estimation produced by  \algo{sgd}.
Our goal is to optimize the loss function assessed on a realistic and noisy quantum machine, whose capabilities of evolving under $H(\vv, t)$ and $H_j$ are imperfect. 

As a concrete example, when using the trivial AQAMs in \rem{triv-aqam}, the actual Hamiltonian of the quantum device may deviate from the description $H(\vv, t)$ in our understanding, and the simulation of $H_j$ may be imprecise due to weak non-tunable terms in $H_c$. We show that our algorithm well approximates the gradient of loss function of the actual devices.

One major advantage of \algo{sgd} is that even though we have imperfect realization of Hamiltonian $H(\vv, t)$ built in the AQAM, the quantum part is executed on the actual quantum machine following the accurate Hamiltonian $\widehat{H}(\vv, t)$. 
As a result, the output of our algorithm well approximates the gradient for the actual quantum machine.
\nps{\begin{lemma}
\label{lem:robust}
Let $\widehat{\PD{\mathcal{L}}{\vv}}$ denote the accurate gradient  of the loss function of the quantum machine, $\frac{\partial \mathcal{L}}{\partial \vv}$ denote the estimated gradient via \algo{sgd}, and $\Vert\cdot\rVert$ represents the matrix spectral norm~\cite{MatrixAnalysis}, then $
    \left\lvert \PD{\mathcal{L}}{\vv}-\widehat{\PD{\mathcal{L}}{\vv}} \right\rvert \leq 2\lVert M \rVert T\max_{\tau \in[0, T]}\left\lVert\frac{\partial H}{\partial \vv}(\vv, \tau)-\frac{\partial \widehat{H}}{\partial \vv}(\vv, \tau)\right\rVert.$
\end{lemma}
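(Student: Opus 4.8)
The plan is to reduce the comparison of the two gradients to a single commutator integral via the functional–derivative (chain-rule) form behind \thm{vd}, and then to estimate that integral with elementary operator-norm inequalities. The key structural observation is the one emphasized in the text: in the robustness setting every quantum step of \algo{sgd} is physically executed on the real device, so the time evolution and the measurement are governed by the \emph{true} Hamiltonian $\widehat H(\vv,t)$ rather than by our model $H(\vv,t)$. Writing $\widehat U_{\vv}(t_2,t_1)$ for the true evolution operator, $\ket{\widehat\psi(t)}=\widehat U_{\vv}(t,0)\ket{\psi_0}$, and $\widehat M(t)=\widehat U_{\vv}^{\dagger}(T,t)\,M\,\widehat U_{\vv}(T,t)$ for the Heisenberg-picture observable under the true dynamics, I expect both gradients to be expressible against the \emph{same} states $\ket{\widehat\psi(t)}$ and the \emph{same} transported observable $\widehat M(t)$; the only discrepancy will sit inside one operator slot of a commutator.

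Concretely, the parameter-shift identity behind \thm{vd} gives $p_j^{-}(t)-p_j^{+}(t)=i\bra{\widehat\psi(t)}[H_j,\widehat M(t)]\ket{\widehat\psi(t)}$ once the evolution operators are the true ones (the shift insertions $e^{-iH_j\theta}$ being applied as in \rem{triv-aqam}). Plugging this into \eqn{deriv} and using $\sum_j \PD{u_j}{\vv}(\vv,t)\,H_j=\PD{H}{\vv}(\vv,t)$ (since $H_c$ and the $H_j$ are $\vv$-independent), the estimate produced by \algo{sgd} takes the form
\[
\PD{\mathcal{L}}{\vv}=\int_0^T i\,\bra{\widehat\psi(t)}\!\left[\PD{H}{\vv}(\vv,t),\,\widehat M(t)\right]\!\ket{\widehat\psi(t)}\,\d t,
\]
whereas the exact gradient of the loss of the real machine is the same formula with $\PD{H}{\vv}$ replaced by $\PD{\widehat H}{\vv}$. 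Subtracting and using linearity of the commutator in its first argument, I would obtain a clean representation in which the integrand is $i\bra{\widehat\psi(t)}[\Delta(t),\widehat M(t)]\ket{\widehat\psi(t)}$ with $\Delta(t):=\PD{H}{\vv}(\vv,t)-\PD{\widehat H}{\vv}(\vv,t)$.

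The final estimate is then routine. For each fixed $t$, since $\ket{\widehat\psi(t)}$ is a unit vector the expectation is bounded in modulus by the spectral norm of the commutator, and $\lVert[\Delta(t),\widehat M(t)]\rVert\le 2\lVert\Delta(t)\rVert\,\lVert\widehat M(t)\rVert$. Because $\widehat U_{\vv}(T,t)$ is unitary, conjugation leaves the spectral norm invariant, so $\lVert\widehat M(t)\rVert=\lVert M\rVert$. Bounding $\lVert\Delta(t)\rVert$ by its maximum over $[0,T]$ and integrating the constant bound over an interval of length $T$ then yields the factor $T$ and the claimed inequality.

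The main obstacle is conceptual rather than computational: it is the justification that both gradients may be written against one common dynamics. This is precisely what collapses the difference into a single commutator and makes the bound depend only on the \emph{derivative} mismatch $\PD{H}{\vv}-\PD{\widehat H}{\vv}$; had I instead compared a model-dynamics gradient with a true-dynamics gradient, the states and the transported observables would also differ, forcing a Duhamel/Gr\"onwall expansion and a weaker bound controlled by $\lVert H-\widehat H\rVert$ integrated in time. A secondary point I would state carefully is that the shift insertions $e^{-iH_j\theta}$ are taken to be exact (consistent with \rem{triv-aqam}), so that the model operators $H_j$—and hence $\PD{H}{\vv}$—are what appear in the estimate's commutator; any residual imprecision there would simply be folded into $\Delta(t)$.
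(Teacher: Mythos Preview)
Your proposal is correct and follows essentially the same approach as the paper: both write the two gradients as integrals against the \emph{true} evolution operators $\widehat U_{\vv}$, so that the difference reduces to a single integrand containing only $\PD{H}{\vv}-\PD{\widehat H}{\vv}$, and then bound by spectral norms and the length $T$ of the interval. The only cosmetic difference is that you package the integrand as an expectation of a commutator $i[\Delta(t),\widehat M(t)]$ and invoke $\lVert[A,B]\rVert\le 2\lVert A\rVert\lVert B\rVert$, whereas the paper keeps the equivalent ``$+\,h.c.$'' form and bounds directly; the arguments are otherwise identical.
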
}
\nps{The proof of this lemma is detailed in Appendix~\ref{append:rob-lem}.} In other words, one can use \algo{sgd} to \emph{optimize the control pulses without a precise understanding of the machine Hamiltonian}, if the difference $H(\vv, t)-\widehat{H}(\vv, t)$ varies slowly w.r.t. $\vv$ (a rather mild condition to satisfy).

On the contrary, if one relies solely on the Hamiltonian $H(\vv,t)$ built in the abstract quantum analog machine (e.g., the classical simulation of quantum systems in GRAPE or CRAB), the approximation error could be as large as the difference $H(\vv, t)-\widehat{H}(\vv, t)$ itself, a potentially large term compared with its derivative w.r.t. $\vv$. An example is illustrated in \append{robust}.
This particular advantage of \algo{sgd} exists exactly because of its execution on the real machine, which {\em is potentially applicable in general scenarios where only approximate machine descriptions are obtainable}.    

Similar to the circuit case, the systematic error caused by the imprecise evolution under $H_j$ is bounded by the error sum in each evolution under $H_j$ for duration $(1\pm\frac{3}{4})\pi$, which is usually small.

\section{Quantum optimization}
\label{sec:opt}

Many important optimization problems in both physics and combinatorics
that allow variational solutions can be formulated easily in our framework. 
For example, finding the ground state of physics systems can be solved by variational quantum eigensolver (VQE) (e.g.~\cite{peruzzo2014variational,kandala2017hardware}), and searching for the max-cut of graphs can be approximated by quantum approximate optimization algorithms (QAOA)~\cite{farhi2014quantum}.
Replacing parameterized circuits by AQAMs in existing variational quantum algorithms leads to significantly improved convergence based on numerical experiments on a classical simulator.

\begin{figure}[b]
    \centering
    \begin{minipage}{\linewidth}
    \centering
    \begin{subfigure}[t]{.6\linewidth}
        \includegraphics[width=\linewidth]{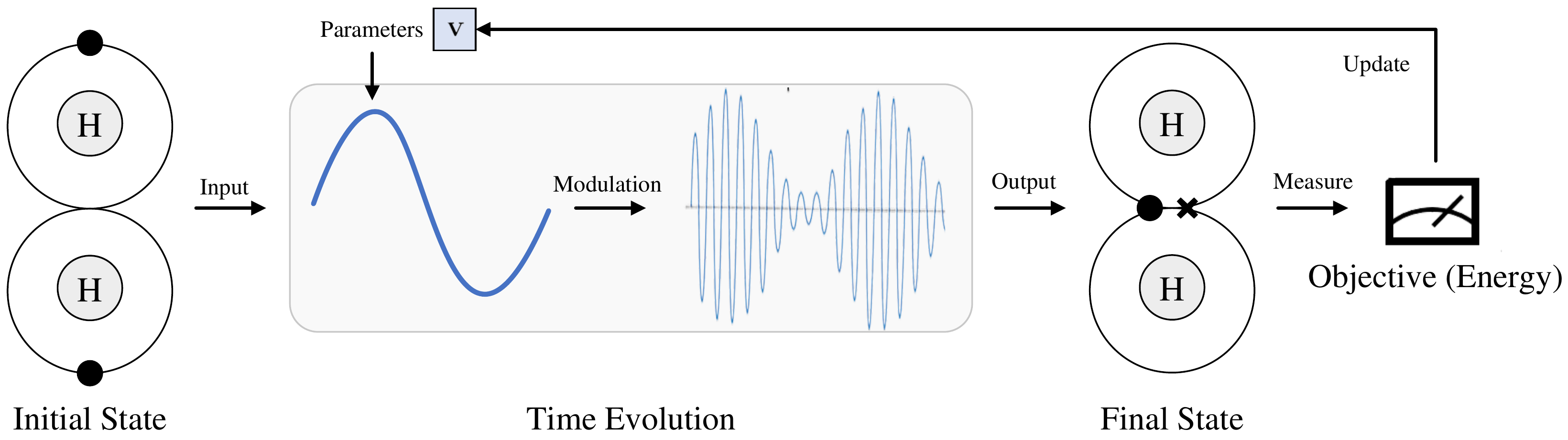}
        \caption{Analog-ansatz-based VQE for the $\text{H}_2$ system}
        \label{fig:vqe-workflow}
    \end{subfigure}
    \qquad
    \begin{subfigure}[t]{.25\linewidth}
        \includegraphics[width=\linewidth]{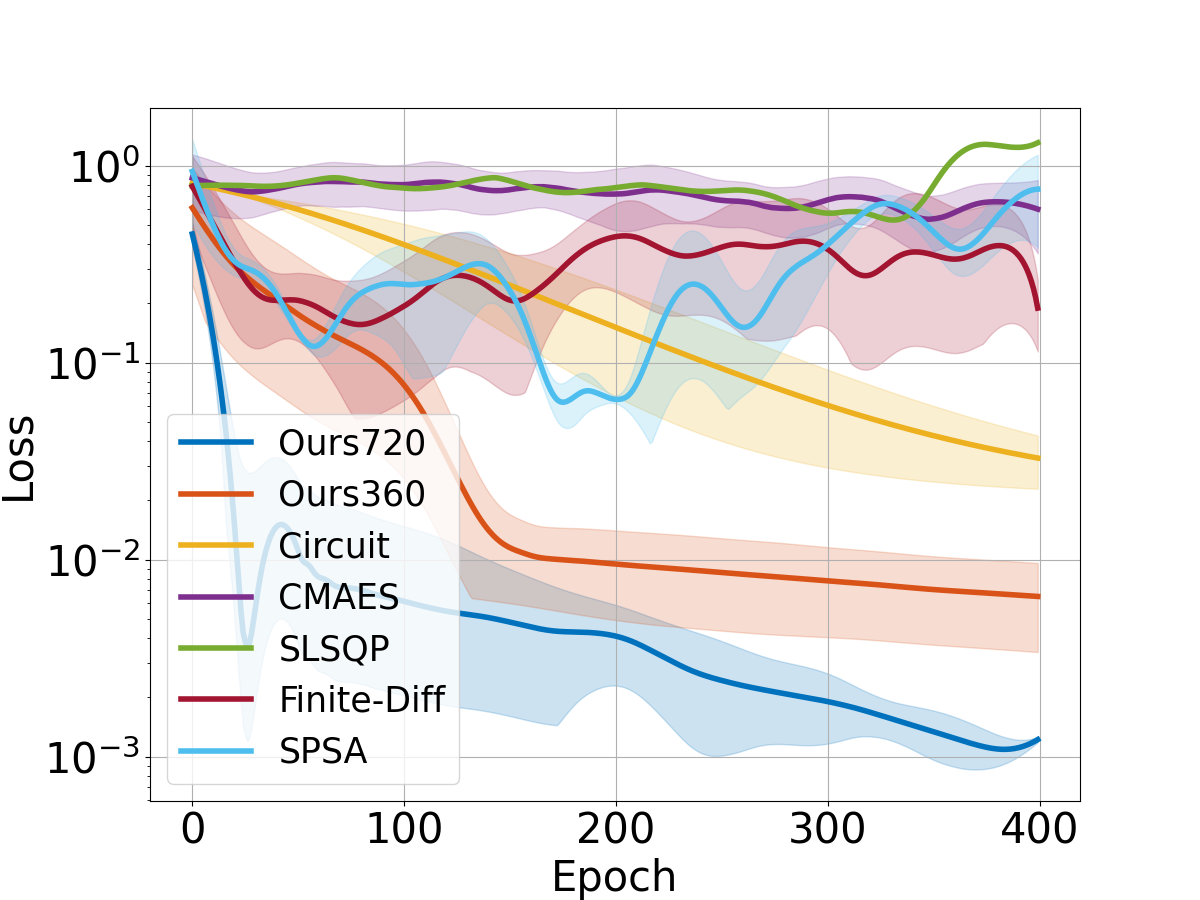}
        \caption{Ground energy search}
        \label{fig:vqe-exps}
    \end{subfigure}    
    \end{minipage}
    
    \begin{minipage}{\linewidth}
    \centering
    \begin{subfigure}[t]{.6\linewidth}
        \includegraphics[width=1\linewidth]{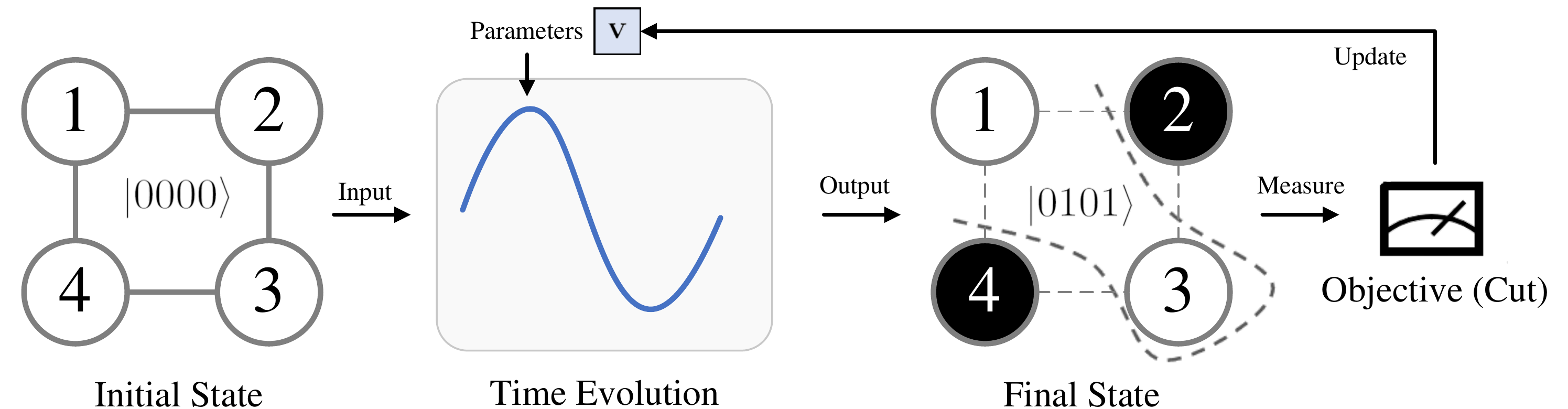}
        \caption{Analog-ansatz-based QAOA for a MaxCut problem}
        \label{fig:qaoa-workflow}
    \end{subfigure}
    \qquad
    \begin{subfigure}[t]{.25\linewidth}
        \includegraphics[width=\linewidth]{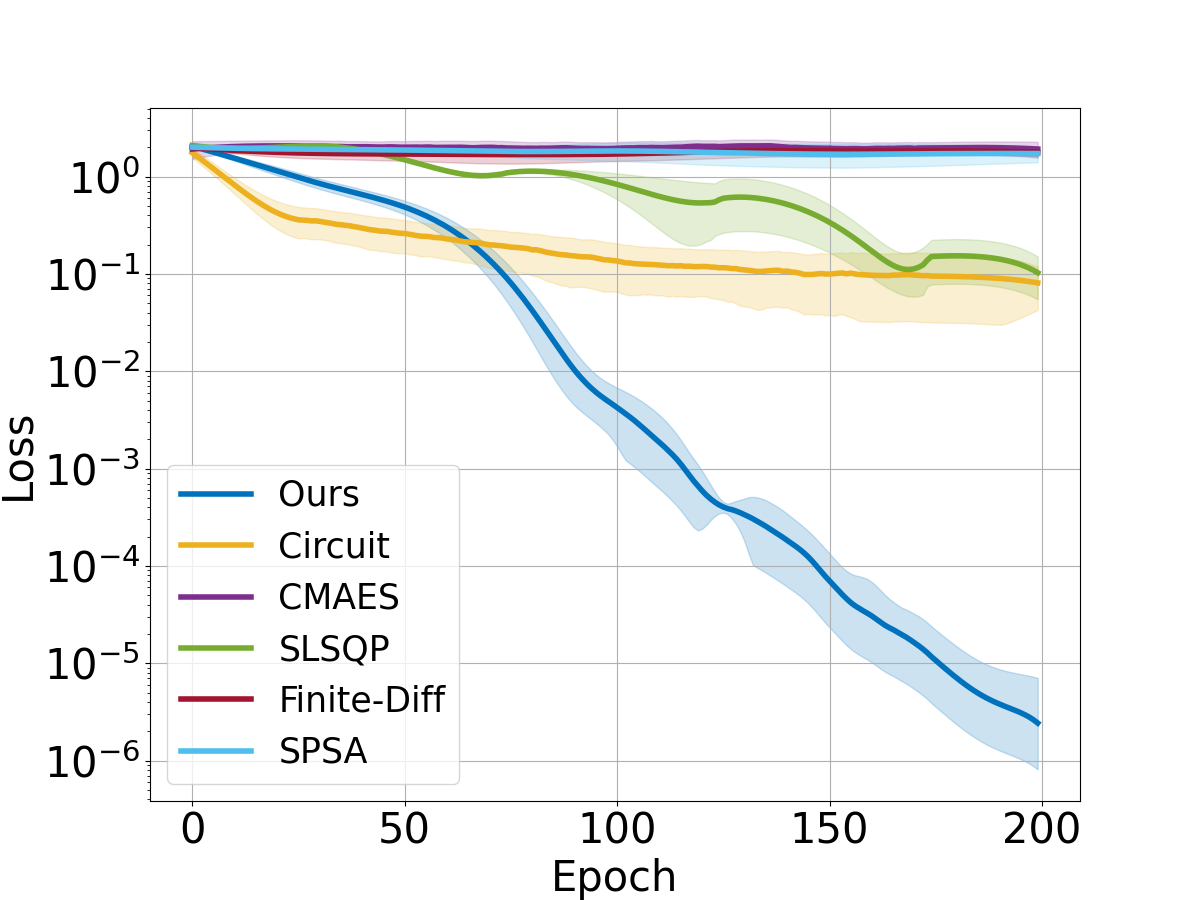}
        \caption{Cut size maximization}
        \label{fig:qaoa-exps}
    \end{subfigure}
    \end{minipage}

    \caption{{\bf Experiments on quantum optimization problems.} (a) Experiment results on the $\text{H}_2$ ground energy search. Loss function is the difference of the evaluated energy to the ground energy of $\text{H}_2$, and the lower is the better. Our method with the same ($720\dt$) or less ($360\dt$) pulse duration converges more than 10 times faster than the circuit model~\cite{kandala2017hardware}. (b) Experiment results on finding the max cut for 4 vertices circular graph. Loss function is the difference of the evaluated cut size to the maximal cut size, the lower the better. Our method outperforms the others by orders of magnitude.
    }
\label{fig:opt}

\end{figure}

\subsection{Variational quantum eigensolver}
\label{sec:vqe}
We exhibit our approach via an AQAM comparable to existing circuit VQE in terms of the pulse duration, 
with significantly better convergence and hardware efficiency in identifying the ground state of the $\text{H}_2$ molecule.

\mypara{Problem setting.} 
The Hamiltonian of $\text{H}_2$ molecule is expanded with Pauli matrices in the form $ H_{\text{H}_2}=\alpha_0 \mathbb{I}+\alpha_1 Z_1 Z_2+\alpha_2 X_1 X_2+\alpha_3 Z_1 + \alpha_4 Z_2$, where $\alpha_i$ is a scalar weight calculated in~\cite{kandala2017hardware}. The ground state $\ket{\psi}$ has the minimal energy, defined by $\text{argmin}_{\ket{\psi}}\bra{\psi}H_{\text{H}_2}\ket{\psi}.$ This energy function is computed on real machines by Pauli measurements and linearity. 

\mypara{AQAM design.} 
We propose an IBM-AQAM mimicking IBM transmon system~\cite{magesan2020effective} (a detailed introduction is given in \append{ibm}), and hence  realizable on IBM's machines.
The IBM-AQAM contains 2 qubits and 4 input pulses $u_{jk}(t)$, and can evolve under
\begin{equation}\label{eqn:ibm-ansatz}
    H(t)=H_{\text{sys}}+\sum\nolimits_{j,k\in\{0, 1\}} \mathcal{M}_{jk}(u_{jk}, t)X_j.
\end{equation}
Here $H_{\text{sys}}$ is a constant Hamiltonian.
The input pulses to IBM quantum devices are $u_{jk}(t)$ which are complex values with norm less than $1$. These pulses are modulated by the built-in modulation procedure $\mathcal{M}_{jk}$ of the IBM's quantum devices when executed on the real machine. 
Since the tunable terms have Hamiltonian $X_j$, we require the IBM-AQAM to be able to evolve under $X_j$. This is realizable on IBM's machines because in \eqn{ibm-ansatz}, $H_{\text{sys}}$ is much weaker than the microwave input pulses for each qubit in form $X_j,$ ensuring a high fidelity simulation of Hamiltonian $X_j$ on real machines. We also require the IBM-AQAM to support initializing in state $\ket{00}$ and measuring with $M=H_{\text{H}_2}$, and these procedures are easy to implement for IBM's machines.
We adopt the parameterization
\begin{equation}\label{eqn:pulse-ansatz}
u_{jk}(\vv, t)=\mathcal{N}\left(\sum\nolimits_{l=0}^d (\vv_{jkl0}+i\vv_{jkl1})\cdot P_l\left(\frac{2t}{T}-1\right)\right)
\end{equation} 
to make the pulse norms less than $1$, where $\mathcal{N}(0)=0, \mathcal{N}(z)=\frac{S(|z|)}{|z|}z$ for $z\neq 0$ is a differentiable normalization function restricting the norm, $S(x)=\frac{1-e^{-x}}{1+e^{-x}}$ is the shifted sigmoid function, $T$ is the duration, and $P_l$ is the $l$-th Legendre polynomial.

In \cite{kandala2017hardware}, a parameterized circuit is proposed as layered tunable single qubit rotations and fixed cross-resonance gates, which are compiled to pulses fitting in \eqn{ibm-ansatz} and sent to the IBM quantum devices. Their one-layer experiments over two qubits have cross resonance gates compiled to pulses with duration around $720\texttt{dt}$ where \texttt{dt}=$0.22$ns. We match it in our experiments, setting $T=720\texttt{dt}$. Additionally, we test our approach with only half the duration, $T=360\texttt{dt}.$

\mypara{Comparisons.} We compare our approach to circuit VQE, finite difference method, simultaneous perturbation stochastic approximation (SPSA), and derivative-free methods (CMAES \cite{hansen1996adapting} and SLSQP \cite{kraft1988software}) with the IBM-AQAM on a classical simulator. The experiment results are displayed in \fig{vqe-exps} with a detailed hyper-parameter settings in \append{vqe}. 

The circuit VQE converges to $\mathcal{L}=0.02,$ while our approach decreases lower: at 400 epoch, with $720\texttt{dt}$ it decreases to less than $0.002$, and with $360\texttt{dt}$ it decreses to $0.01$. In general, our approach is over 10 times more accurate than the circuit VQE, and 100 times more accurate than the derivative-free methods. The finite difference method and SPSA do not converge because of the intrinsic randomness of quantum measurements at relatively small observation mini-batch ($b_{\text{obs}}=100$), which would be amplified by the small difference length. With a large enough observation mini-batch, the gradients evaluated by finite difference method has $\sim3\%$ relative difference to the gradients evaluated by our approach. These results exhibit the advantages of our differentiable analog framework compared to circuit model and derivative-free analog models.\footnote{We note that the benefits of using analog controls in VQE for $\text{H}_2$ molecule are also discovered in a recent result~\cite{VQE-Gate-Free}, where the optimization is conducted based on GRAPE-like techniques.
}

\subsection{Quantum approximate optimization algorithm}
\label{sec:qaoa}

We also investigate the application of QAOA to approximate solutions for the MaxCut problem, an \texttt{NP}-complete problem.
With a corresponding AQAM, we achieve a significantly better convergence.

\mypara{Problem setting.} Given a graph $G=\{E, V\}$ where $V$ is the vertex set and $E=\{(i,j)\}$ contains all the edges, our goal is to partition the vertices into two sets $(V_0, V_1)$ so that the number of edges between the two sets is maximized. A cut of an $n$-node graph $G$ is represented by an $n$-bit string $s = b_1 b_2...b_n$, with $b_j\in\{0, 1\}$ representing in which set the $j$-th vertex is. We use the computational basis $\ket{s}$ in an $n$-qubit register to represent the cut $s$. A maximum cut $\ket{s}$ maximizes the expected cut size $\bra{s}C\ket{s}$, where $C=\frac{1}{2}\sum_{(j, k)\in E} C_{j,k}$, $C_{j,k}=\mathbb{I}-Z_jZ_k$. We test the performance on the circular graph shown in the leftmost of \fig{qaoa-workflow}.

\mypara{AQAM design.} \citet{farhi2014quantum} optimizes a $p$-layer circuit ansatz $U(\beta, \gamma)=\Pi^p_{j=1}e^{-i\beta_j B}e^{-i\gamma_j C}$, where $B\!=\!\sum_{j=1}^n X_j$. We set $p\!=\!2$ as a baseline.

To have a fair comparison, we design a Cut-AQAM corresponding to the above circuit: $H(t)=\frac{1}{2\pi}\sum\nolimits_{j=1}^4 \left(u_{j0}(t)C_{j,j+1}+u_{j1}(t)X_j\right)$.

The input pulses are real functions $u_{jk}(t)$, where we restrict the energy input by requiring $|u_{jk}|\leq 1.$ Cut-AQAM natively supports evolving merely under $C_{j, j+1}$ or $X_j$ by setting $u_{jk}$ as indicator functions. We also require it to support initializing in state $\ket{0}$ and measuring with observable $C$. In our experiment, we set the duration $T=4$ within which the circuit QAOA can be realized by Cut-AQAM. Similar to \eqn{pulse-ansatz}, we parameterize $u_{jk}(\vv, t)$ by a normalized linear combination of Legendre's polynomial. 

\mypara{Comparisons.} The experiments are set up in a similar way as in \sec{vqe}, with details in \append{qaoa}. Results are shown in \fig{qaoa-exps}. The circuit QAOA and SLSQP converges to 0.08 at 200 epoch, while our method converges to $2.6\times 10^{-6}$. Finite difference method, SPSA, and CMAES do not converge to a value less than 1. The analysis is similar to \sec{vqe}. Since Cut-AQAM does not have a high frequency modulation, SLSQP also converges close to 0, but is slower than our approach. We conduct larger experiments for up to 11 qubits, which shows good scalability and better performance compared to circuit QAOA. For details, see \fig{scale} in \append{qaoa}.

\begin{figure}[t]
    \centering
    \begin{minipage}{\linewidth}
    \centering
    \begin{subfigure}[t]{.245\linewidth}
        \includegraphics[width=\linewidth, trim=0.2cm 0cm 2.9cm 2.4cm, clip]{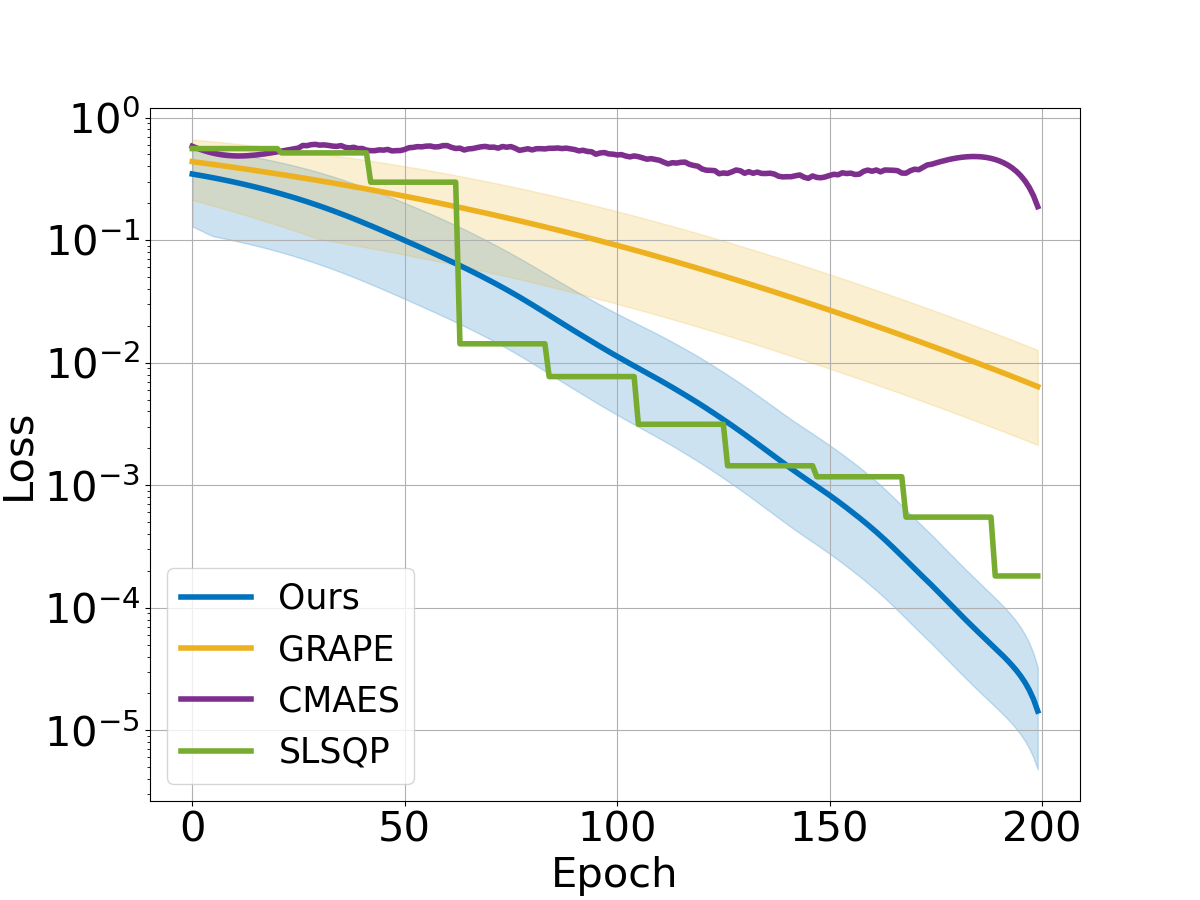}
        \caption{\small $\ket{+}$ preparation }
    \end{subfigure}
    \hfill
    \begin{subfigure}[t]{.245\linewidth}
        \includegraphics[width=\linewidth, trim=0.2cm 0cm 2.9cm 2.4cm, clip]{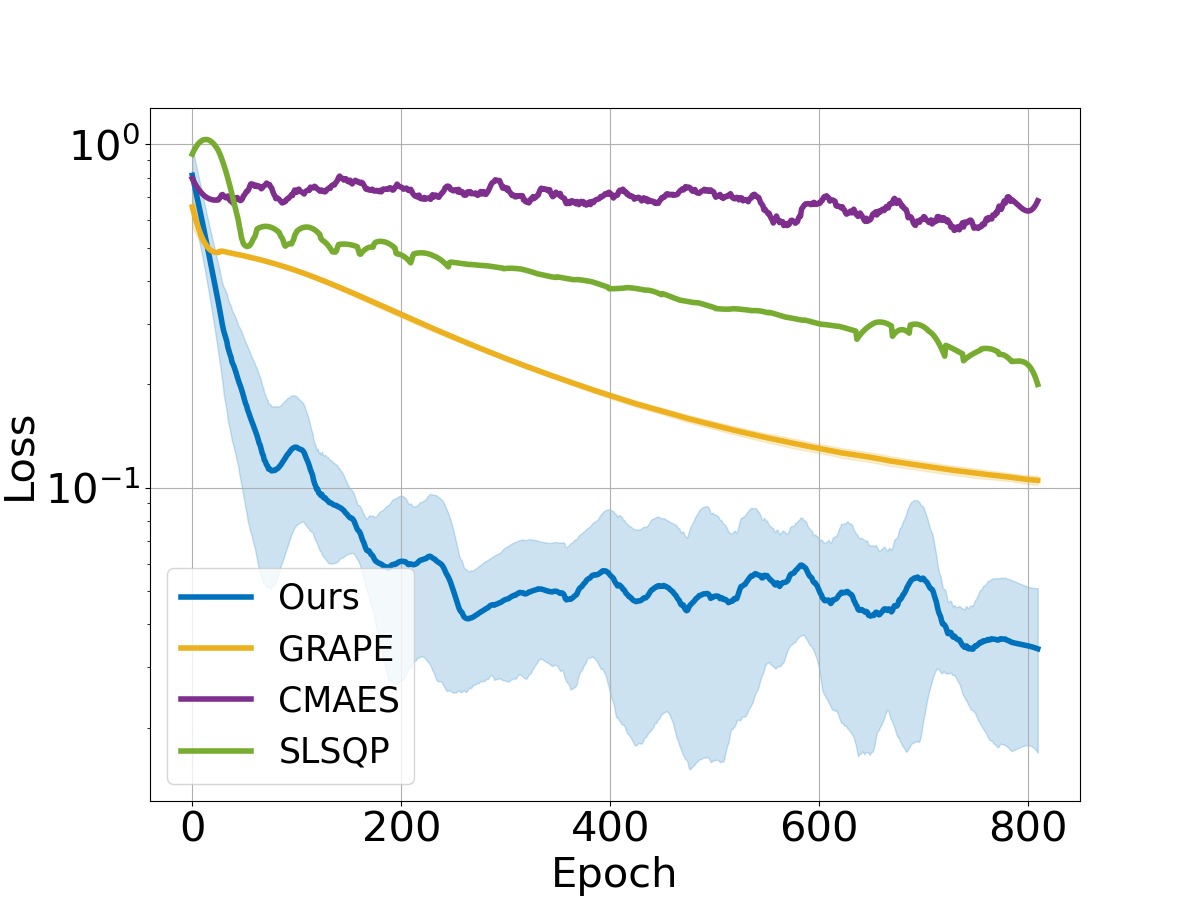}
        \caption{\small $\ket{\Phi^+}$ preparation}
    \end{subfigure}
    \hfill
    \begin{subfigure}[t]{.245\linewidth}
        \includegraphics[width=\linewidth, trim=0.2cm 0cm 2.9cm 2.4cm, clip]{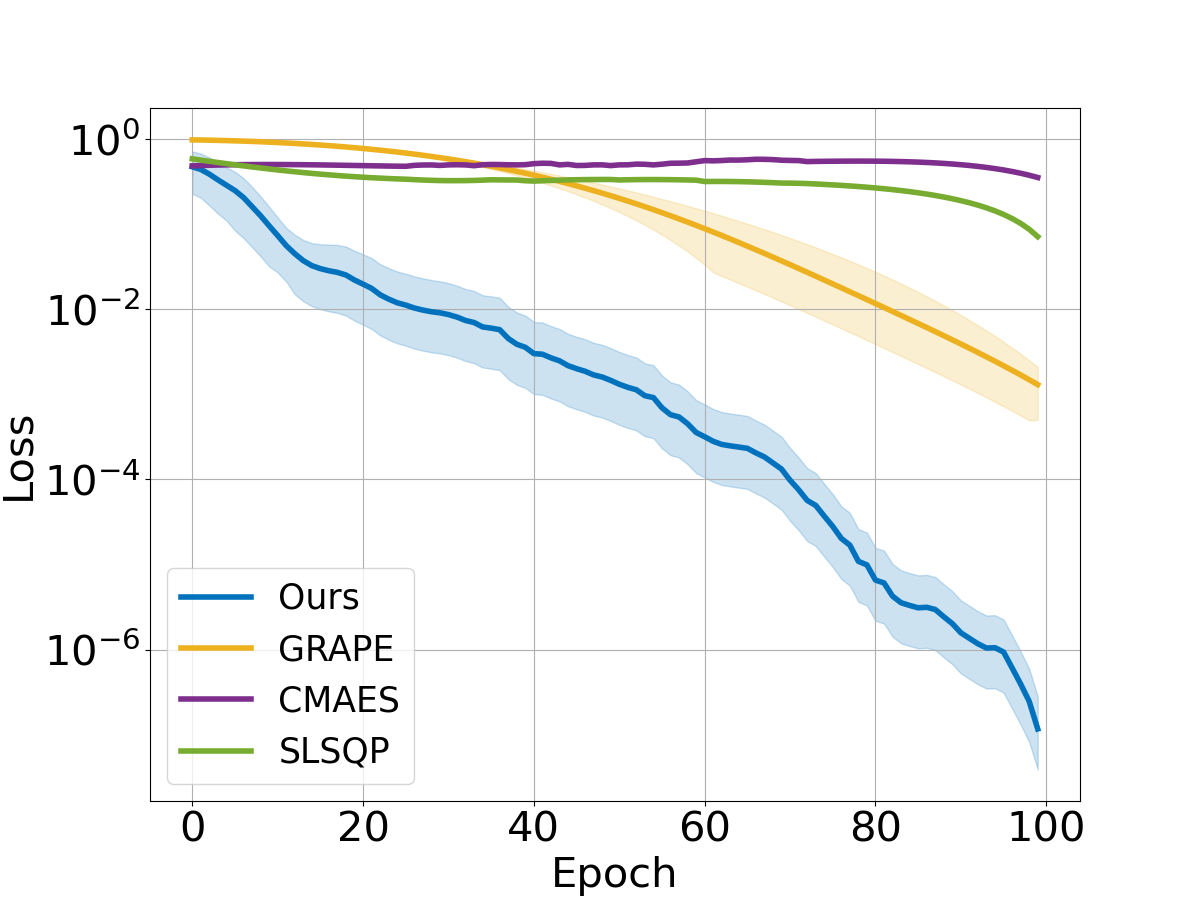}
        \caption{\small X gate synthesis}
    \end{subfigure}
    \hfill
    \begin{subfigure}[t]{.245\linewidth}
        \includegraphics[width=\linewidth, trim=0.2cm 0cm 2.9cm 2.4cm, clip]{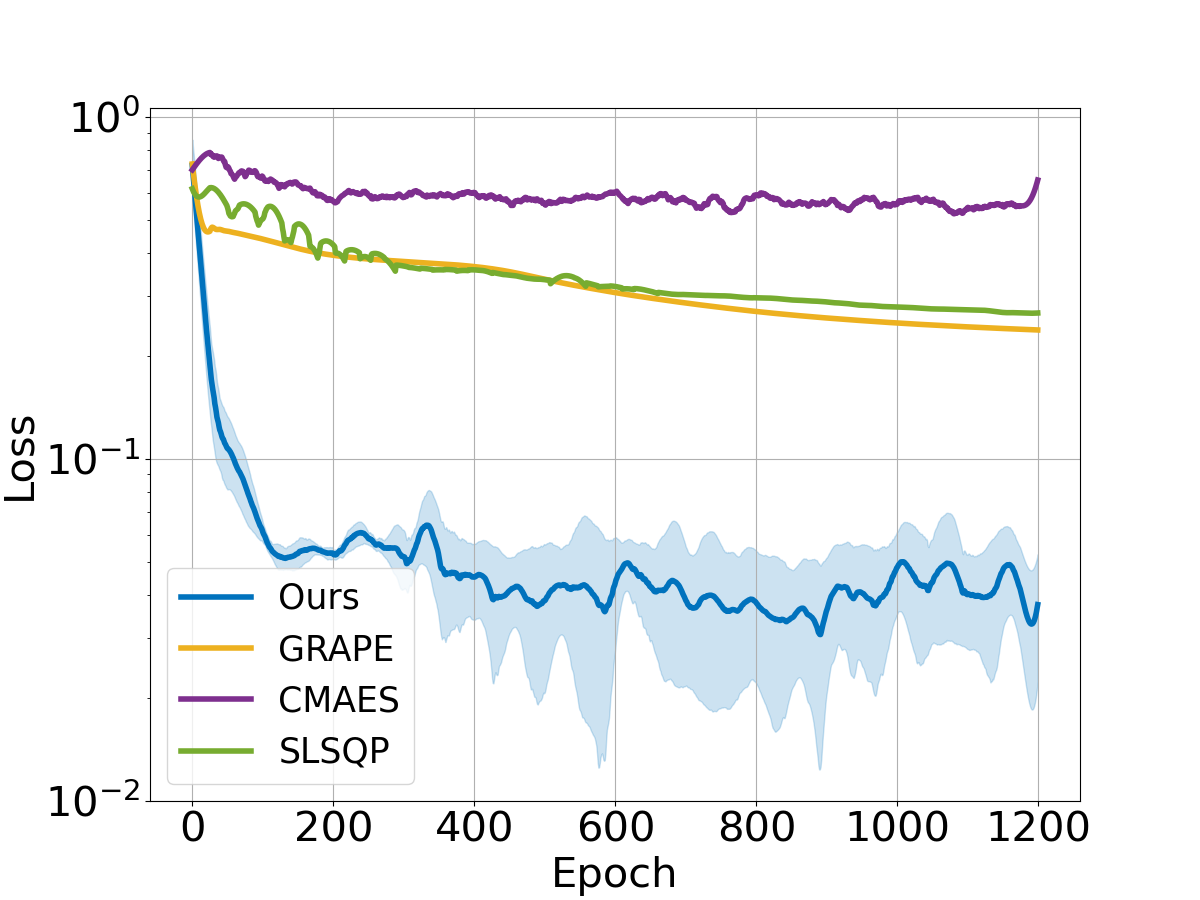}
        \caption{\small CNOT gate synthesis}
    \end{subfigure}
    \end{minipage}
\caption{{\bf Quantum Control.} We apply our differentiable analog quantum computing framework to state preparation and gate synthesis. Four methods (SLSQP, CMAES, GRAPE, and ours) are used in these tasks. In gate synthesis, the accuracy of the built-in gates from IBM Qiskit is shown in dashed lines. Our method outperforms the other three algorithms in terms of convergence rate and accuracy by up to orders of magnitude.}

\label{fig:control}
\end{figure}

\section{Quantum Control}
\label{sec:ctrl}

Quantum control problems fall into two categories: 1) \textit{state preparation}: to steer a given initial state into a target final state; 2) \textit{gate synthesis}: to effect a specific unitary transformation (quantum gate) in the system. In what follows, we discuss how to formulate and solve quantum control problems using our differentiable analog quantum computing framework. We demonstrate our methodology by numerical experiments on the IBM-AQAM.

\subsection{State Preparation}

To prepare the target state $\ket{\psi_{tar}}$ from certain fixed initial state, we desire a parameter vector $\vv$ that minimizes the loss function defined in \eqn{loss} with the observable $M = \mathbb{I}-\ket{\psi_{tar}}\bra{\psi_{tar}}$. We consider two tasks: 1) to prepare the state $\ket{+}$ from $\ket{0}$; 2) to prepare the two-qubit maximally entangled state $\ket{\Phi^+}$ from $\ket{00}$.\footnote{$\ket{+} = \frac{1}{\sqrt{2}}(\ket{0}+\ket{1})$ and $\ket{\Phi^+} = \frac{1}{\sqrt{2}}(\ket{00}+\ket{11})$.}
In both tasks, the loss function is readily computed as the measurement merely involves local Pauli operators (see \append{state-prep}), and can be carried out on the IBM-AQAM.

In the numerical experiments, the pulse duration is fixed as $T = 20 \texttt{dt}$ for the $\ket{+}$ state, and $T = 1200\texttt{dt}$ for the $\ket{\Phi^+}$ state. We also compare our method with two gradient-free methods (SLSQP, CMAES) and the GRAPE algorithm. In both tasks, ours achieves faster convergence than all other three methods. In the $\ket{+}$ task, the final loss value from our method reads approximately $10^{-5}$, which is 18 times better than the second best result (i.e., SLSQP), as in in \fig{control} (a). In the $\ket{\Phi^+}$ task, the final loss value from our method reads $0.034$, which is 3 times better than GRAPE and 6 times better than SLSQP, as shown in \fig{control} (b).

\subsection{Gate Synthesis}

Gate synthesis is more challenging than state preparation because we hope to engineer a full unitary gate $U_{tar}$ instead of just mapping a single state to a target state. To this end we first specify a set of pairs $\mathcal{S} = \{(\ket{\xx_j}, \ket{\yy_j})\}^k_{j=1}$ that completely determines $U_{tar}$ in the sense that no unitary map $U$ other than $U_{tar}$ satisfies $\Big|\bra{\yy_j} U \ket{\xx_j}\Big| = 1$ for all $j=1,2,..,k$. Then, we consider the loss function $\mathcal{L} = \frac{1}{k}\sum^k_{j=1} \mathcal{L}_j$, where each $\mathcal{L}_j$ is defined as in \eqn{loss} with quantum observable $M =\mathbb{I}-\ket{\yy_j}\bra{\yy_j}$ and initial state $\ket{\xx_j}$. When $\mathcal{L}(\vv)$ is close to 0, the time evolution controlled by the parameter $\vv$ is approximately $U_{tar}$.

The X gate and CNOT gate are widely used in digital quantum computing and supported by IBM Qiskit. We now exemplify our method by recovering these two gates on the IBM-AQAM. In both cases, the loss function can be readily computed on IBM machines. See \append{gate-syn} for detail.

The pulse duration is chosen to be comparable to the built-in ones in IBM Qiskit: $T=160\dt$ for X gate, and $T = 1200\dt$ for CNOT gate.\footnote{In IBM Qiskit, the X gate is implemented by a pulse of duration $T=160\dt$, and CNOT is by $T=1056\dt$.} 
We apply four methods (SLSQP, CMAES, GRAPE, ours) to the gate synthesis tasks. The results are shown in ~\fig{control} (c), (d). In the synthesis of X gate, the final loss value from our method is $1.17\times 10^{-7}$, which is over $10^4$ times more accurate than the other three methods. In the more involved task of synthesizing CNOT gate, our method returns a pulse sequence with loss $0.0172$, while the results from other methods are no less than $0.2$. It is worth noting that the loss of IBM built-in calibrated pulses evaluated on IBM machines are $0.019$ for X gate and $0.043$ for CNOT gate when no measurement error mitigation or state preparation error mitigation technique is applied.

\section{Conclusion and Future Work}\label{sec:conclusion}

We have introduced the first differentiable analog quantum computing framework with a quantum stochastic gradient descent algorithm that allows directly optimizing the analog pulse control signal on quantum computers. Since the computation history in a quantum system cannot be stored or reused for the purpose of computing gradient, we construct a novel formulation for derivatives on quantum computers based on a forward pass with Monte Carlo sampling. With the proposed algorithm, our method outperforms prior methods by orders of magnitude with better hardware efficiency on quantum optimization and control tasks.

\mypara{Generalization.} This work suggests a larger scope of differentiable physics that extends beyond classical cases to quantum scale, where insights and techniques from one scale could potentially be translated to the other. 
For instance, our quantum differentiable framework is largely inspired by the classical counterpart. 
We also believe some of our findings for quantum computing could also be generally applicable in the classical paradigm: e.g., our differentiation technique is likely extendable to {\em general linear dynamical systems} and our robustness analysis also holds in general when {\em only approximate machine descriptions are known}. 

\mypara{Limitations and Future Work.} As a first step, we only describe our framework with a few important, but small, demonstrating examples running on simulators. 
The great promise of our framework lies in implementing large-scale VQA applications on real quantum machines and bringing useful quantum applications to practice. 
To that end, we plan to incorporate real-world quantum machines into our framework, and solve large-scale VQA tasks.  

\section*{Acknowledgement}
We thank the anonymous reviewers for their helpful feedback. This work was funded in part by U.S. Army Research Office DURIP grant, U.S. Department of Energy, Office of Science, Office of Advanced Scientific Computing Research Award Number DE-SC0019040 and DE-SC0020273, Air Force Office of Scientific Research under award number FA9550-21-1-0209, U.S. National Science Foundation grant CCF-1816695 \& CCF-1942837 (CAREER), Barry Mersky and Capital One E-Nnovate Endowed Professorships.

{
	\bibliographystyle{icml2022}
	\bibliography{paper}
}

\clearpage

\appendix
\noindent
{\bf\Large Appendix}

\section{Connection between Our Method and Deep Learning}
\label{append:diagram}
We show the similarities between our method, Neural ODE, and differentiable physics in \fig{diagram}. All the three approaches have a differentiable system governed by some kinds of differential equations. Our method parametrizes the dynamics using continuous basis functions; Neural ODE uses neural networks; and Differentiable physics describes the dynamics system using physics equations like Newton's Second Law, Navier–Stokes equations.
\begin{figure}[ht!]
\centering
\begin{tabular}{@{}c@{\hspace{1mm}}c@{\hspace{1mm}}c@{\hspace{1mm}}@{}}
    \includegraphics[width=0.3\linewidth]{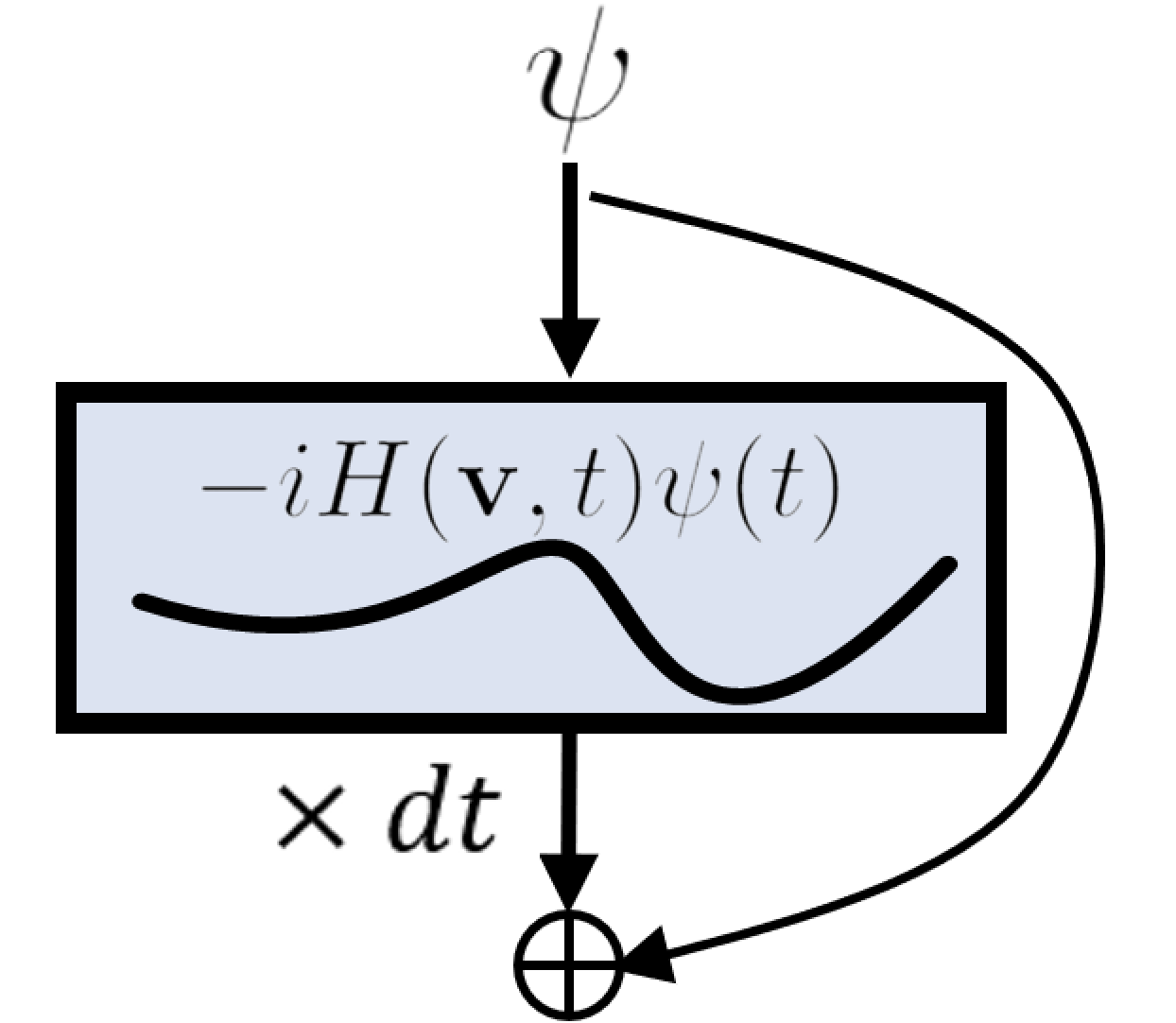} &
    \includegraphics[width=0.3\linewidth]{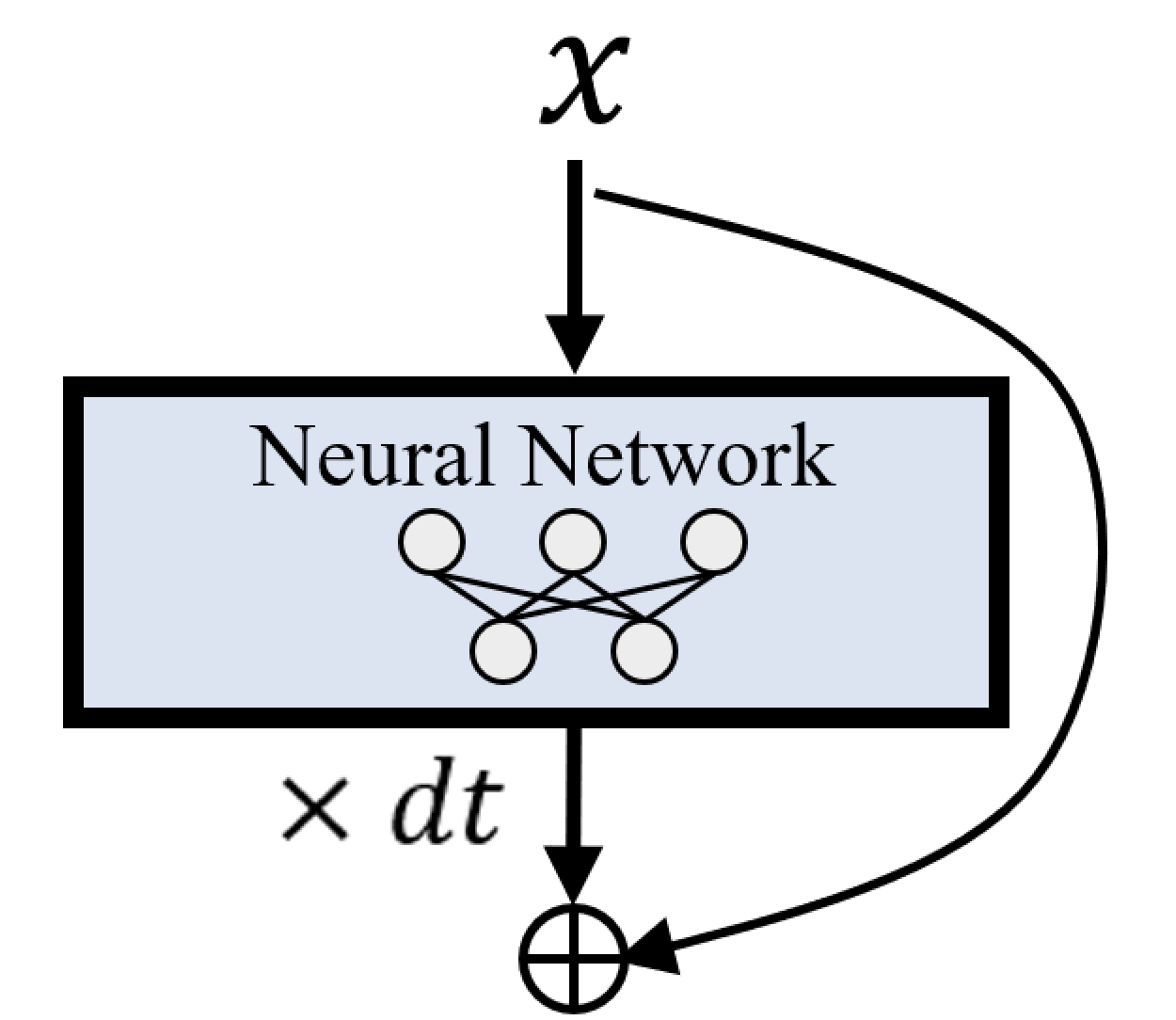} &
    \includegraphics[width=0.3\linewidth]{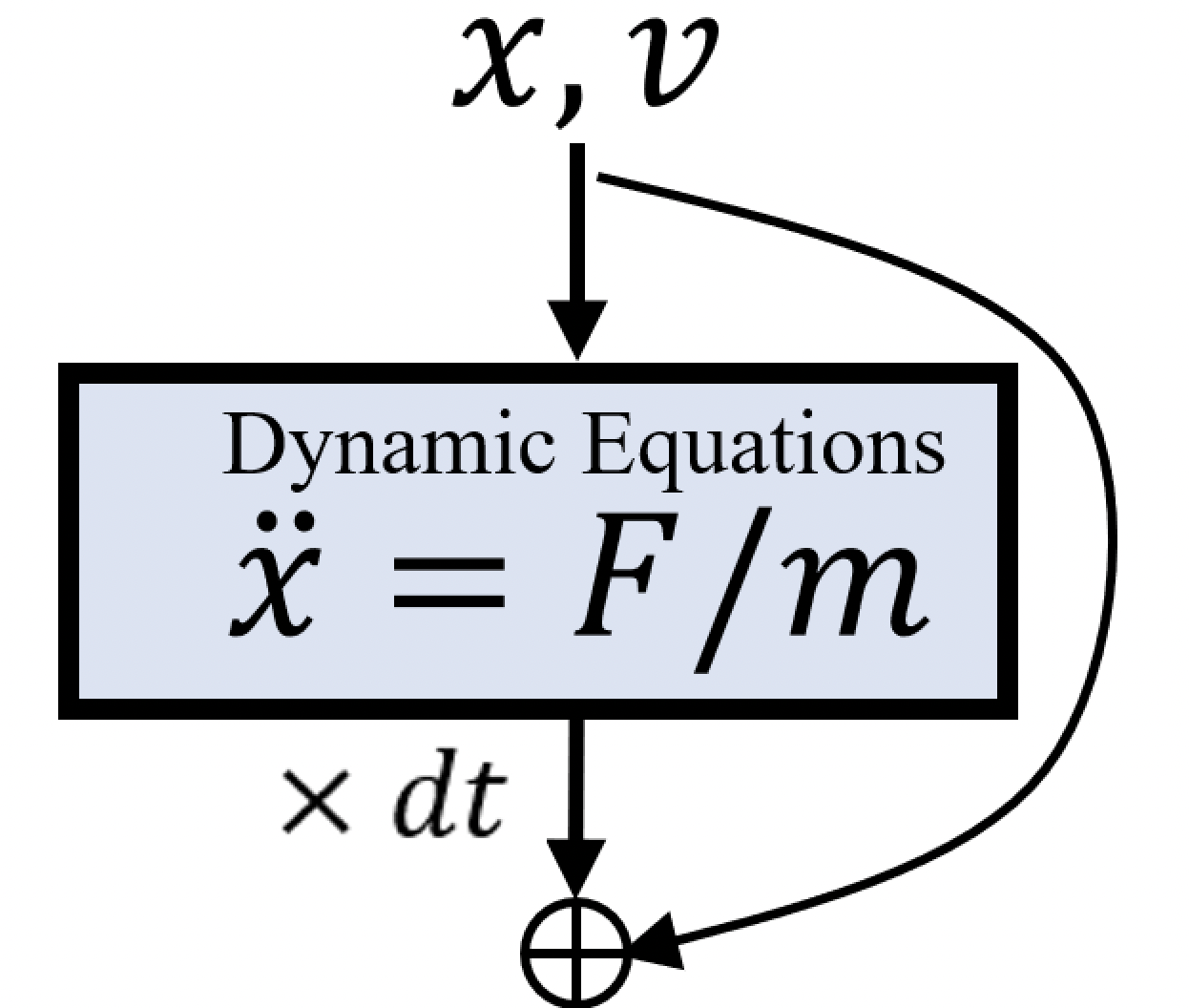} \\
  \small (a) Diff. Analog QC   & \small (b) Neural ODE    & \small (c) Diff. Physics   
\end{tabular}
\vspace{-0.5em}
\caption{{Connections between Diff. Analog computing and other deep learning models.}}
\label{fig:diagram}
\end{figure}

\section{Theory}
\subsection{Derivation of gradient}
\label{append:gradient}

\begin{lemma}
The derivative of the time evolution operator is
\begin{equation}\label{eqn:lem1}
    \PD{U_{\vv}(t_2,t_1)}{\vv}=-i\int_{t_1}^{t_2}\d \tau ~ U_{\vv}(t_2, \tau)  \PD{H(\vv,\tau)}{\vv}U_{\vv}(\tau, t_1)
\end{equation}
\end{lemma}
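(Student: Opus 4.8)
The plan is to convert the defining evolution equation for the propagator into a linear differential equation for its $\vv$-derivative and then solve that equation by variation of parameters (Duhamel's principle). Recall that $U_{\vv}(t,t_1)$ is characterized as the unique solution of the Schr\"odinger equation
\[
\partial_t U_{\vv}(t, t_1) = -i H(\vv, t)\, U_{\vv}(t, t_1), \qquad U_{\vv}(t_1, t_1) = I,
\]
where $I$ is the identity operator, and that it enjoys the composition (group) property $U_{\vv}(t_2, \tau)\, U_{\vv}(\tau, t_1) = U_{\vv}(t_2, t_1)$ together with the inverse relation $U_{\vv}(t_1, t) = U_{\vv}(t, t_1)^{-1}$ coming from unitarity.

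First I would set $W(t) := \PD{U_{\vv}(t, t_1)}{\vv}$ and differentiate the Schr\"odinger equation with respect to $\vv$. Because $H(\vv,t)$ is differentiable in $\vv$ (the pulses $u_j(\vv,t)$ are assumed differentiable) and everything lives in a finite-dimensional space, the mixed partials commute, yielding the inhomogeneous linear equation
\[
\partial_t W(t) = -i H(\vv, t)\, W(t) \;-\; i \PD{H(\vv, t)}{\vv}\, U_{\vv}(t, t_1),
\]
with initial condition $W(t_1) = 0$, since $U_{\vv}(t_1, t_1) = I$ is independent of $\vv$.

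Next I would solve this by variation of parameters: writing $W(t) = U_{\vv}(t, t_1)\, C(t)$ makes the homogeneous term cancel, leaving $\partial_t C(t) = -i\, U_{\vv}(t_1, t)\, \PD{H(\vv, t)}{\vv}\, U_{\vv}(t, t_1)$ with $C(t_1) = 0$. Integrating from $t_1$ to $t$ and substituting back, then collapsing $U_{\vv}(t, t_1)\, U_{\vv}(t_1, \tau) = U_{\vv}(t, \tau)$ by the composition property, gives exactly
\[
W(t_2) = -i\int_{t_1}^{t_2} \d\tau\; U_{\vv}(t_2, \tau)\, \PD{H(\vv,\tau)}{\vv}\, U_{\vv}(\tau, t_1),
\]
which is the claimed formula \eqn{lem1}.

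The main obstacle is bookkeeping rather than conceptual: one must justify interchanging $\partial_t$ and $\partial_{\vv}$ (guaranteed by the assumed differentiability of $u_j(\vv,t)$, hence of $H(\vv,t)$, together with boundedness in finite dimension) and keep track of operator ordering, since $H(\vv,t)$ and $\PD{H}{\vv}$ need not commute with the propagator taken at different times. Alternatively, one could simply verify the identity by checking that the proposed right-hand side satisfies the same initial-value problem as $W$ and invoke uniqueness of solutions; I expect the variation-of-parameters route above to be the cleanest.
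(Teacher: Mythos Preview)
Your proposal is correct and follows essentially the same route as the paper: both derive an inhomogeneous linear equation from the Schr\"odinger evolution and solve it via the variation-of-parameters (Duhamel) formula. The only cosmetic difference is that the paper works with the finite difference $U_{\vv+h}-U_{\vv}$, applies Duhamel to that, and then passes to the limit $h\to 0$, whereas you differentiate in $\vv$ first and then integrate; the two presentations are equivalent.
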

\begin{proof}
Let $U_{\vv}(t_2,t_1)$ be as defined in \thm{vd}. We can re-write \eqn{ours_sch} in the form of time evolution operator,
\begin{equation}\label{eqn:operator-schrodinger}
    i \PD{U_{\vv}(t, 0)}{t} = H(\vv,t) U(t, 0).
\end{equation}

It follows that
\begin{align}
    &i\left(U_{\uu}(t, 0) - U_{\vv}(t, 0)\right)' \nonumber\\
    = &H(\uu,t)  U_{\uu}(t, 0) - H(\vv,t)  U_{\vv}(t, 0) \nonumber\\
    = &[H(\uu,t)-H(\vv,t)]U_{\uu}(t, 0) + H(\vv,t) (U_{\uu} - U_{\vv})(t, 0).
\end{align}

By the variation-of-parameters formula \cite{childs2021theory}, we have
\begin{align}
    &U_{\uu}(t, 0) - U_{\vv}(t, 0)  = -i\int^t_0 \d \tau ~U_{\vv}(t, \tau) [H(\uu,\tau)-H(\vv,\tau)] U_{\uu}(\tau, 0).
\end{align}
Now, if $\uu = \vv + h$, we have $H(\uu,t) = H(\vv,t) + h\PD{H(\vv,t)}{\vv}+O(h^2)$. Therefore, for each $t \ge 0$, we have
\begin{align}
    &\frac{U_{\uu}(t, 0) - U_{\vv}(t, 0)}{h}= -i \int^t_0 \d \tau~ U_{\vv}(t, \tau) \PD{H(\vv,\tau)}{\vv}U_{\uu}(\tau,0) + O(h),
\end{align}
which implies the desired result \eqref{eqn:lem1} by taking $h \to 0$.
\end{proof}

\begin{proposition}
Let $\mathcal{L}$ be defined as \eqref{eqn:loss}, and $H(\vv,t)=\sum_jf_j(\vv,t)H_j$. 
\begin{align}
    &\PD{\mathcal{L}}{\vv} = \Big(-i\int_{0}^{T}\d \tau~\sum_{j=1}^m \PD{f_j(\vv,\tau)}{\vv} \bra{\psi(\tau)}M_\tau H_j\ket{\psi(\tau)}\Big) + h.c. ,
\end{align}
where $\ket{\psi(\tau)} = U(\tau,0) \ket{\psi(0)}$ is the state evolving under $H(\vv, t)$ to time $\tau$ starting from $\ket{\psi(0)}$, and $M_{\tau} = U^\dag(T,\tau)MU(T,\tau)$.
\end{proposition}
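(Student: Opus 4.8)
The plan is to write the loss as an explicit sandwich of the time-evolution operator and differentiate it with the product rule. Starting from \eqn{loss} and $\ket{\psi(T)}=U_{\vv}(T,0)\ket{\psi(0)}$, I would record $\mathcal{L}=\bra{\psi(0)}U_{\vv}^{\dagger}(T,0)\,M\,U_{\vv}(T,0)\ket{\psi(0)}$. Since $M$ and $\ket{\psi(0)}$ carry no $\vv$-dependence, the Leibniz rule produces exactly two terms,
\[
\PD{\mathcal{L}}{\vv}=\bra{\psi(0)}\PD{U_{\vv}^{\dagger}(T,0)}{\vv}\,M\,U_{\vv}(T,0)\ket{\psi(0)}+\bra{\psi(0)}U_{\vv}^{\dagger}(T,0)\,M\,\PD{U_{\vv}(T,0)}{\vv}\ket{\psi(0)}.
\]
I will call these $T_1$ and $T_2$ and process $T_2$ first.

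For $T_2$ I would invoke the preceding identity \eqn{lem1} with $t_2=T,\ t_1=0$, together with the linear decomposition $\PD{H(\vv,\tau)}{\vv}=\sum_{j}\PD{f_j(\vv,\tau)}{\vv}H_j$, which holds because the $H_j$ are $\vv$-independent. This gives $\PD{U_{\vv}(T,0)}{\vv}=-i\int_0^T\d\tau\sum_j\PD{f_j(\vv,\tau)}{\vv}\,U_{\vv}(T,\tau)H_jU_{\vv}(\tau,0)$. Substituting into $T_2$, the operator string that appears is $U_{\vv}^{\dagger}(T,0)\,M\,U_{\vv}(T,\tau)$. Using the propagator composition law $U_{\vv}(T,\tau)U_{\vv}(\tau,0)=U_{\vv}(T,0)$, hence $U_{\vv}^{\dagger}(T,0)=U_{\vv}^{\dagger}(\tau,0)U_{\vv}^{\dagger}(T,\tau)$, I would regroup this as $U_{\vv}^{\dagger}(\tau,0)\,M_\tau$ with $M_\tau=U_{\vv}^{\dagger}(T,\tau)MU_{\vv}(T,\tau)$, and absorb the remaining $U_{\vv}(\tau,0)$ into $\ket{\psi(\tau)}=U_{\vv}(\tau,0)\ket{\psi(0)}$. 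This turns $T_2$ into precisely the claimed integrand $-i\int_0^T\d\tau\sum_j\PD{f_j(\vv,\tau)}{\vv}\bra{\psi(\tau)}M_\tau H_j\ket{\psi(\tau)}$.

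Finally I would show $T_1$ is exactly the Hermitian conjugate of $T_2$, so the two Leibniz terms collapse into ``$+\,h.c.$'' Because each matrix element is a scalar, its Hermitian conjugate is its complex conjugate; for $\vv$ real one has $\PD{U_{\vv}^{\dagger}}{\vv}=\big(\PD{U_{\vv}}{\vv}\big)^{\dagger}$, and taking the conjugate of $T_2=\bra{\psi(0)}U_{\vv}^{\dagger}MU_{\vv}'\ket{\psi(0)}$ (using $\overline{\bra{a}A\ket{b}}=\bra{b}A^{\dagger}\ket{a}$ and $M^{\dagger}=M$) returns exactly $T_1$. Thus the h.c. structure needs only that $M$ is Hermitian and $\vv$ is real. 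The main obstacle I anticipate is bookkeeping rather than conceptual: correctly tracking the daggers and the two-time argument order of the propagators through the composition law, and verifying that $M^{\dagger}=M$ is precisely what makes the two terms conjugate. A minor technical point is justifying differentiation under the integral sign and the interchange of $\PD{}{\vv}$ with the dagger, both of which follow from the smoothness of $U_{\vv}$ guaranteed by \eqn{lem1}.
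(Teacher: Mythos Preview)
Your proposal is correct and follows essentially the same approach as the paper: apply the product rule to $\mathcal{L}=\bra{\psi(0)}U^{\dagger}(T,0)MU(T,0)\ket{\psi(0)}$, identify one term as the Hermitian conjugate of the other, substitute the lemma \eqn{lem1} for $\PD{U(T,0)}{\vv}$ together with $\PD{H}{\vv}=\sum_j\PD{f_j}{\vv}H_j$, and regroup via the propagator composition law to expose $M_\tau$ and $\ket{\psi(\tau)}$. Your write-up is in fact more explicit than the paper's about why the two Leibniz terms are conjugates (Hermiticity of $M$, reality of $\vv$), which is a nice touch.
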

\begin{proof}
The derivatives can be computed as,
\begin{align}\label{eq:diff}
    \PD{\mathcal{L}}{\vv}&=\bra{\psi(0)}U^\dag(T,0)M\PD{U(T,0)}{\vv}\ket{\psi(0)}+h.c.\nonumber \\
    &=\Big(-i\int_{0}^{T}\d \tau~\sum_j \PD{f_j(\vv,\tau)}{\vv} \bra{\psi(0)}U^\dag(\tau,0)U^\dag(T,\tau)MU(T,\tau)H_jU(\tau,0)\ket{\psi(0)}\Big) +h.c.
\end{align}
\end{proof}
To compute the derivative $\PD{\mathcal{L}}{\vv}$ on a quantum machine, we invoke the parameter-shift formula.

Then we show the MCI generates an unbiased estimation of $\PD{\mathcal{L}}{\vv}$ and converges of rate $O(b_{\mathrm{int}}^{-\frac{1}{2}})$. We require that the derivatives of $u(\vv, t)$ are bounded: $\forall t \in [0,T], ~ \left|\frac{\partial u_j(\vv, t)}{\partial \vv}\right|\leq D.$

The integration mini-batch draws time samplings $\tau\sim\text{Uniform}(0, T),$ and evaluates
\begin{equation}\label{eqn:gs}
f(\tau) = \sum_{j=1}^m \PD{u_j}{\vv}(\vv,\tau) \left(p_{j}^{-}(\tau)-p_j^{+}(\tau)\right).
\end{equation}
Then \eqn{deriv} turns to $\PD{\mathcal{L}}{\vv}=\int_0^T f(\tau) \d\tau.$
By MCI, the average value of $T\cdot f(\tau)$ in the integration mini-batch is an unbiased estimation of $\PD{\mathcal{L}}{\vv}$. 
By applying the Popocivius's inequality and the Cauchy-Schwarz inequality, we obtain the following variance bound, which guarantees the low error of MCI.

\begin{proposition}
\label{prop:MCI}
The variance of $f(\tau)$ is finite. Specifically,
\begin{equation}
    \text{Var}[f(\tau)]\leq 4m^2 \lVert M\rVert^2 D^2.
\end{equation}
\end{proposition}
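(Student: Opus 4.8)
The plan is to bound $f(\tau)$ pointwise in absolute value over all $\tau\in[0,T]$ and then convert that uniform bound into a variance bound via Popoviciu's inequality; the derivative hypothesis on $u_j$ and the unitarity of the quantum dynamics supply the two ingredients. First I would unpack the definition of $p_j^{\pm}(\tau)$ from \thm{vd}. Tracing the inner loop of \algo{sgd}, each $p_j^{\pm}(\tau)$ is exactly the expectation value $\bra{\phi_j^{\pm}(\tau)}M\ket{\phi_j^{\pm}(\tau)}$ of the observable $M$ in the state
\[
\ket{\phi_j^{\pm}(\tau)} = U_{\vv}(T,\tau)\,e^{-iH_j(1\pm\frac{3}{4})\pi}\,U_{\vv}(\tau,0)\ket{\psi_0}.
\]
Because $H_j$ is Hermitian and every $U_{\vv}(\cdot,\cdot)$ is unitary, $\ket{\phi_j^{\pm}(\tau)}$ is a unit vector, so the variational characterization of the spectral norm gives $\lvert p_j^{\pm}(\tau)\rvert \le \lVert M\rVert$. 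Hence $\lvert p_j^{-}(\tau)-p_j^{+}(\tau)\rvert \le 2\lVert M\rVert$ for every $j$ and every $\tau$.

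Next I would combine this with the standing assumption $\bigl\lvert \PD{u_j}{\vv}(\vv,t)\bigr\rvert \le D$ and apply the Cauchy--Schwarz inequality to the sum defining $f$:
\begin{align*}
\lvert f(\tau)\rvert
&= \left\lvert \sum_{j=1}^m \PD{u_j}{\vv}(\vv,\tau)\bigl(p_j^{-}(\tau)-p_j^{+}(\tau)\bigr)\right\rvert \\
&\le \Bigl(\sum_{j=1}^m D^2\Bigr)^{1/2}\Bigl(\sum_{j=1}^m 4\lVert M\rVert^2\Bigr)^{1/2}
= 2mD\lVert M\rVert .
\end{align*}
Thus $f(\tau)$ takes values in the interval $[-2mD\lVert M\rVert,\,2mD\lVert M\rVert]$ of length $4mD\lVert M\rVert$ as $\tau$ ranges over $[0,T]$.

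Finally I would invoke Popoviciu's inequality on variances, which states that a random variable supported in an interval of length $\ell$ has variance at most $\ell^2/4$. Applying this to $f(\tau)$ with $\tau\sim\mathrm{Uniform}(0,T)$ yields $\mathrm{Var}[f(\tau)] \le \tfrac14(4mD\lVert M\rVert)^2 = 4m^2\lVert M\rVert^2 D^2$, which is the claimed bound (the cruder estimate $\mathrm{Var}[f]\le\mathbb{E}[f^2]\le\max_\tau\lvert f(\tau)\rvert^2$ incidentally gives the same constant). I expect the only real obstacle to be the first step: recognizing that, despite the intimidating eight-operator expression for $p_j^{\pm}$, it is nothing but the expectation of $M$ in a normalized state, so that unitarity collapses the bound cleanly to $\lVert M\rVert$. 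Once that reduction is seen, the derivative bound, Cauchy--Schwarz, and Popoviciu are entirely routine.
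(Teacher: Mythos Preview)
Your proposal is correct and follows essentially the same approach the paper indicates: bound each $p_j^{\pm}(\tau)$ by $\lVert M\rVert$ via unitarity, apply Cauchy--Schwarz to get $\lvert f(\tau)\rvert\le 2mD\lVert M\rVert$, and finish with Popoviciu's inequality. The paper only sketches this (``by applying the Popoviciu's inequality and the Cauchy--Schwarz inequality''), so your write-up in fact fills in exactly the details the authors elide.
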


\subsection{Proof of \lem{robust}}
\label{append:rob-lem}

\begin{proof}We let $\widehat{\PD{\mathcal{L}}{\vv}}$ denote the accurate gradient of the loss function of the quantum machine, and $\frac{\partial \mathcal{L}}{\partial \vv}$ denote the estimated gradient via \algo{sgd}. Their analytical expressions are
\begin{align}
    \PD{\mathcal{L}}{\vv} &= -i\int_0^T \d \tau \bra{\psi_0}\widehat{U}_\vv^{\dagger}(T, 0) M \widehat{U}_\vv(T, \tau)\PD{H}{\vv}\widehat{U}(\tau, 0)\ket{\psi_0} +h.c., \\
    \widehat{\PD{\mathcal{L}}{\vv}} &= -i\int_0^T \d \tau \bra{\psi_0}\widehat{U}_\vv^{\dagger}(T, 0) M \widehat{U}_\vv(T, \tau)\PD{\widehat{H}}{\vv}\widehat{U}(\tau, 0)\ket{\psi_0} +h.c.,
\end{align}
where $h.c.$ means the Hermitian conjugate of the first half of the expression (this is a common abbreviation among physics, resulting in a Hermitian matrix), and $\widehat{U}_\vv(t_2, t_1)$ is the evolution operator under the actual Hamiltonian $\widehat{H}(\vv, t)$ of the realistic machine. Hence the difference between these two evaluations are
\begin{align}\label{eqn:real_gradient}
\left\lvert\PD{\mathcal{L}}{\vv}-\widehat{\PD{\mathcal{L}}{\vv}}\right\rvert &= \left\lvert-i\int_0^T \d \tau \bra{\psi_0}\widehat{U}_\vv^{\dagger}(T, 0) M \widehat{U}_\vv(T, \tau)\left(\PD{H}{\vv}-\PD{\widehat{H}}{\vv}\right)\widehat{U}(\tau, 0)\ket{\psi_0} +h.c.\right\rvert \\
&\leq 2T\lVert M\rVert\max_{\tau \in[0, T]}\left\lVert\frac{\partial H}{\partial \vv}(\vv, \tau)-\frac{\partial \widehat{H}}{\partial \vv}(\vv, \tau)\right\rVert,
\end{align}
where $\lVert\cdot\rVert$ is the spectral norm~\cite{MatrixAnalysis}.
\end{proof}

\subsection{Simulating Approximated System}
\label{append:robust}

We show that there are cases that, when simulating the system based on imprecise approximation of the machine Hamiltonian, the gradient estimated is largely off to the actual gradient.

Consider a 1 qubit system $H(v, t)=\frac{\pi}{4}Y+vY$ evolve for time in $[0, 1]$ with initial state $\ket{\psi_0}=\ket{0}$, and measurement $M=\ket{0}\bra{0}.$ Let the approximation be $\tilde{H}(v, t)=vY$, whose difference to $H(v, t)$ is constant. Then by \algo{sgd} where the system evolution is executed on quantum machine, the estimated gradient is exact since $\PD{H}{\vv}=\PD{\tilde{H}}{\vv}=Y.$ When simulating the system with $\tilde{H}$ and estimating the gradient as $\widehat{\PD{\mathcal{L}}{\vv}}$, however, consider when $v=0$. Note
\begin{align}
    \widehat{\PD{\mathcal{L}}{\vv}}&=-i\bra{\psi_0}MY\ket{\psi_0}+h.c.=0, \\
    \PD{\mathcal{L}}{\vv}&=-i\bra{\psi_0}e^{i\frac{\pi}{4}Y}MYe^{-i\frac{\pi}{4}Y}\ket{\psi_0}+h.c.=1.
\end{align}
In a similar way, one can construct cases where the difference between estimated gradient based on the approximated system and actual gradient is as large as $\Theta(T \cdot \lVert M\rVert \cdot \lVert H(v, t)-\tilde{H}(v, t)\rVert).$

\section{IBM Pulse-level control}
\label{append:ibm}

We adopt the following effective Hamiltonian \cite{magesan2020effective} to model the pulse-level qubit control on the $n$-qubit IBM machine:
\begin{align}\label{eqn:ibm_ham}
    H(t) = H_{sys} + H_{ctrl}(t).
\end{align}
The system (drift) Hamiltonian is independent of time,
\begin{align}
      H_{sys} =  \sum_{j=1}^n\frac{\epsilon_j}{2} (I_j-Z_j) +\sum_{\substack{(j, k)\in E \\ j\neq k}} \frac{J_{jk}}{4} (X_jX_k+Y_jY_k),
\end{align}
where $E$ is a bidirectional connectivity graph with self loops, and $J_{jk}=J_{kj}.$
The control Hamiltonian is 
\begin{align}
    H_{ctrl}(t) = \sum_{j=1}^n \sum_{k\in E_j} \mathcal{M}_{jk}(u_{jk}, t)X_j.,
\end{align}
where the function $\mathcal{M}_{jk}(u_{jk},t)$ modulates pulse $u_{jk}$ with a local oscillatory frequency $\omega_k$ of qubit $k$:
\begin{align}
    \mathcal{M}_{jk}(u_{jk}, t) &= \Omega_j\mathbf{Re}\{e^{i\omega_k t}u_{jk}(t)\},
\end{align}
where $\Omega_j$ is the maximal energy input on qubit $j$, $u_{jk}(t)$ is a complex-valued control pulses and $|u_{jk}(t)| \le 1$.

On the real machine, $u_{jk}(t)$ should be a piece-wise constant complex-valued function, with each piece having length $\texttt{dt}=0.222$ns. In our simulation we take $u_{jk}(t)$ as continuous function. Taking piece-wise constant approximation of our results will generate pulses that are available on the real machine.

In our 1-qubit experiments, the involved constants are: $n=1$, $\epsilon_1=3.29\times 10^{10}, \omega_1=2\pi \times 5.23\times 10^9, \Omega_1=9.55\times 10^8$ and $E_1=\{1\}.$

In our 2-qubit experiments, the involved constants are: $n=2$, $\epsilon_1=3.29\times 10^{10}, \epsilon_2=3.15\times 10^{10}, \omega_1=2\pi\times 5.23\times 10^9, \omega_2=2\pi\times 5.01\times 10^9, \Omega_1=9.55\times 10^8, \Omega_2=9.87\times 10^8, J_{12}=1.23\times 10^7$ and $E_1=E_2=\{1, 2\}.$

\section{Experiment Details}\label{append:details}

We present the detailed experiment setups here. Our code is written in Python and C++ and the experiments run on a Desktop with an Intel Xeon W-2123 CPU @ 3.6GHz. In the comparisons, we use existing packages for CMAES and SLSQP with default hyperparameters. More details can be found in our supplementary code.

\subsection{Quantum Optimization}
\label{append:opt}

\subsubsection{Experiment details in comparison to VQE}
\label{append:vqe}

The workflow of analog-ansatz-based VQE is shown in \fig{vqe-workflow}. The $\text{H}_2$ molecule Hamiltonian at bond distance has coefficients $\alpha_0=-1.0524, \alpha_1=-0.0113, \alpha_2=0.1809, \alpha_3=-0.3979, \alpha_4=0.3979$.

The parameters of different methods are presented below. When updating, the measurements for all methods has sampling size (observation mini-batch size) $b_{\text{obs}}=100$. When evaluating, matrix multiplication is utilized to have accurate loss function calculation.

\mypara{Hyper-parameters in our approach.} The integration mini-batch size $b_{\text{int}}=6$; the maximal degree of Legendre polynomial is $d=3$; the initialization uses Gaussian distribution centered at $0$; the optimizer is Adam with learning rate $lr=0.02$; time duration is $T=720\texttt{dt}$ and $T=360\texttt{dt}$ correspondingly.

\mypara{Circuit ansatz VQE:} We use a three layered circuit ansatz. Let $U_j(\theta_1, \theta_2, \theta_3)=e^{-i\frac{\theta_3}{2}Z_j}e^{-i\frac{\theta_2}{2}X_j}e^{-i\frac{\theta_1}{2}Z_j},$ and $V_j(\theta)=U_1(\theta_{j11}, \theta_{j21}, \theta_{j31})U_2(\theta_{j12}, \theta_{j22}, \theta_{j32}).$ Then the circuit ansatz is $U(\theta)=V_2(\theta) e^{-i\frac{\pi}{2}Z_1X_2} V_1(\theta).$ We substitute a ZX rotation for the cross resonance gate in \cite{kandala2017hardware} since they generate the same entanglement. 

\mypara{Finite difference:} We use the same ansatz as in our approach. The finite difference formula used is:
\begin{align}
    FD\left(\PD{\mathcal{L}}{\vv_i}, h\right)  = \frac{1}{2h}\left( \mathcal{L}(\vv+h\mathbf{e}_i)-\mathcal{L}(\vv-h\mathbf{e}_i)\right),
\end{align}
where $\mathbf{e}_i$ is the unit vector whose only non-zero entry is the $i$-th entry with value $1$. We take a small pertubation $h=10^{-4}$ and simulate the quantum system twice to evaluate corresponding influences.

\subsubsection{Experiment details in comparison to QAOA}
\label{append:qaoa}

The workflow of analog-ansatz-based QAOA is shown in \fig{qaoa-workflow}. The settings of the measurements and the finite difference method are similar as \append{vqe}.

\mypara{Hyper-parameters in our approach:} The integration mini-batch size $b_{\text{int}}=1$; the maximal degree of Legendre polynomial is $d=3$; the initialization uses Gaussian distribution centered at $0$; the optimizer is Adam with learning rate $lr=0.02$; time duration is $T=4$.

\begin{figure}
\centering
\begin{tabular}{@{}c@{\hspace{1mm}}c@{\hspace{1mm}}c@{\hspace{1mm}}@{}}
    \includegraphics[width=0.32\linewidth, trim=0 0.2cm 0 1cm, clip]{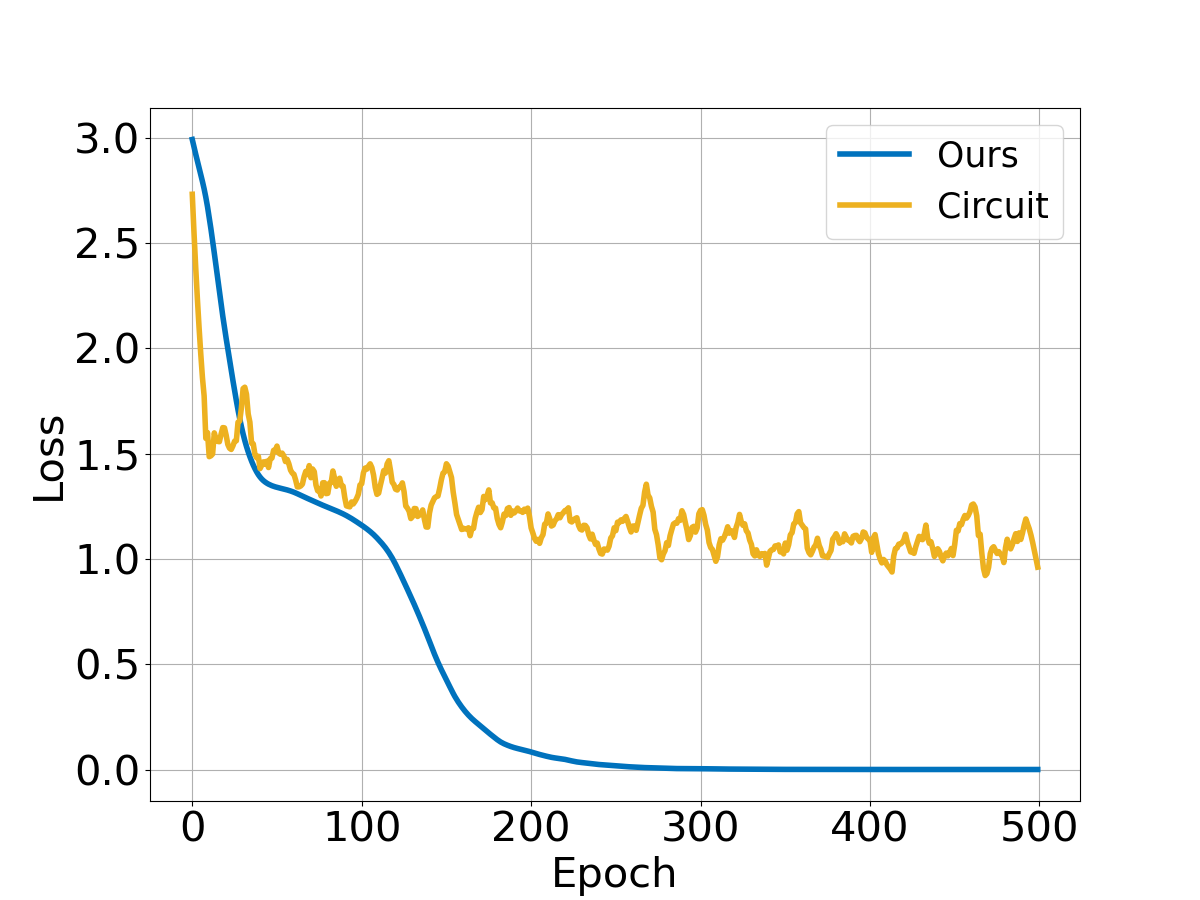} &
    \includegraphics[width=0.32\linewidth, trim=0 0.2cm 0 1cm, clip]{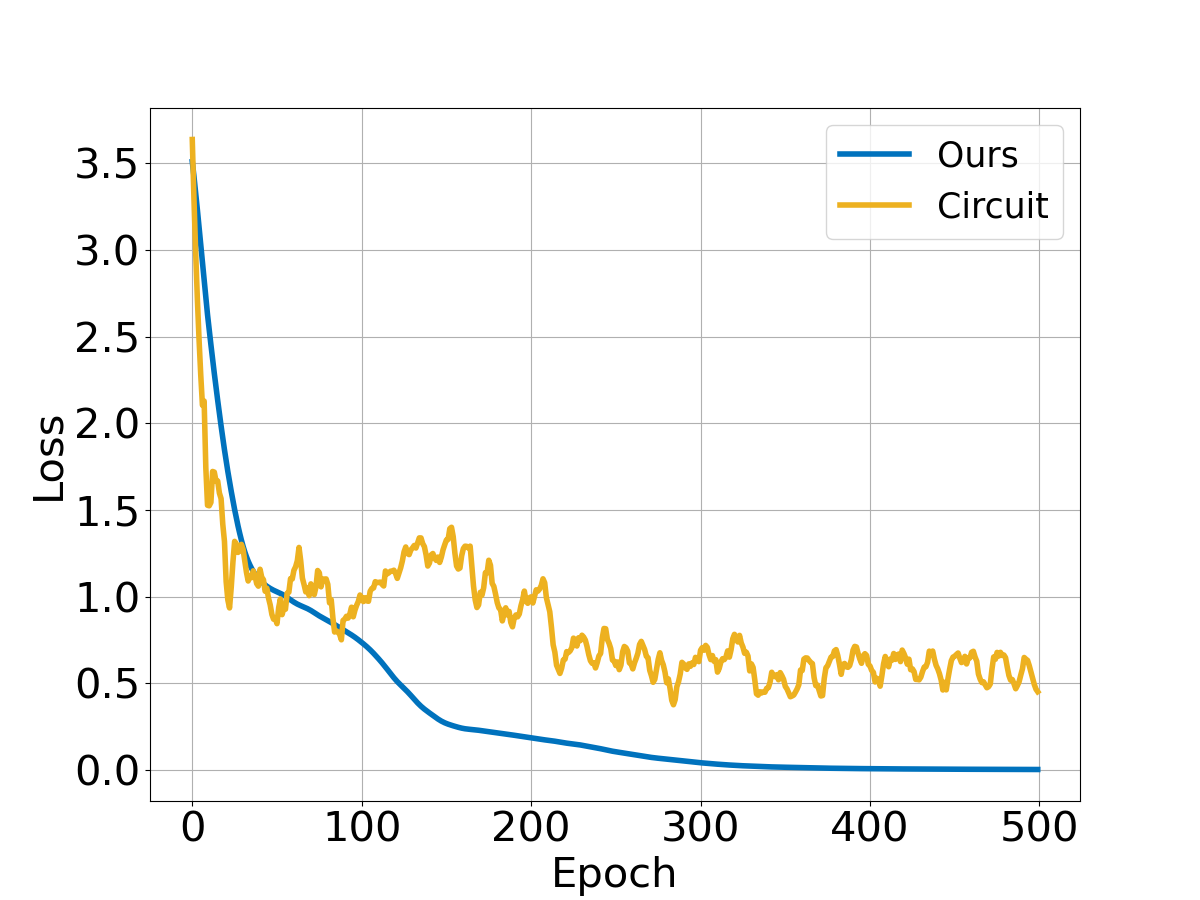} &
    \includegraphics[width=0.32\linewidth, trim=0 0.2cm 0 1cm, clip]{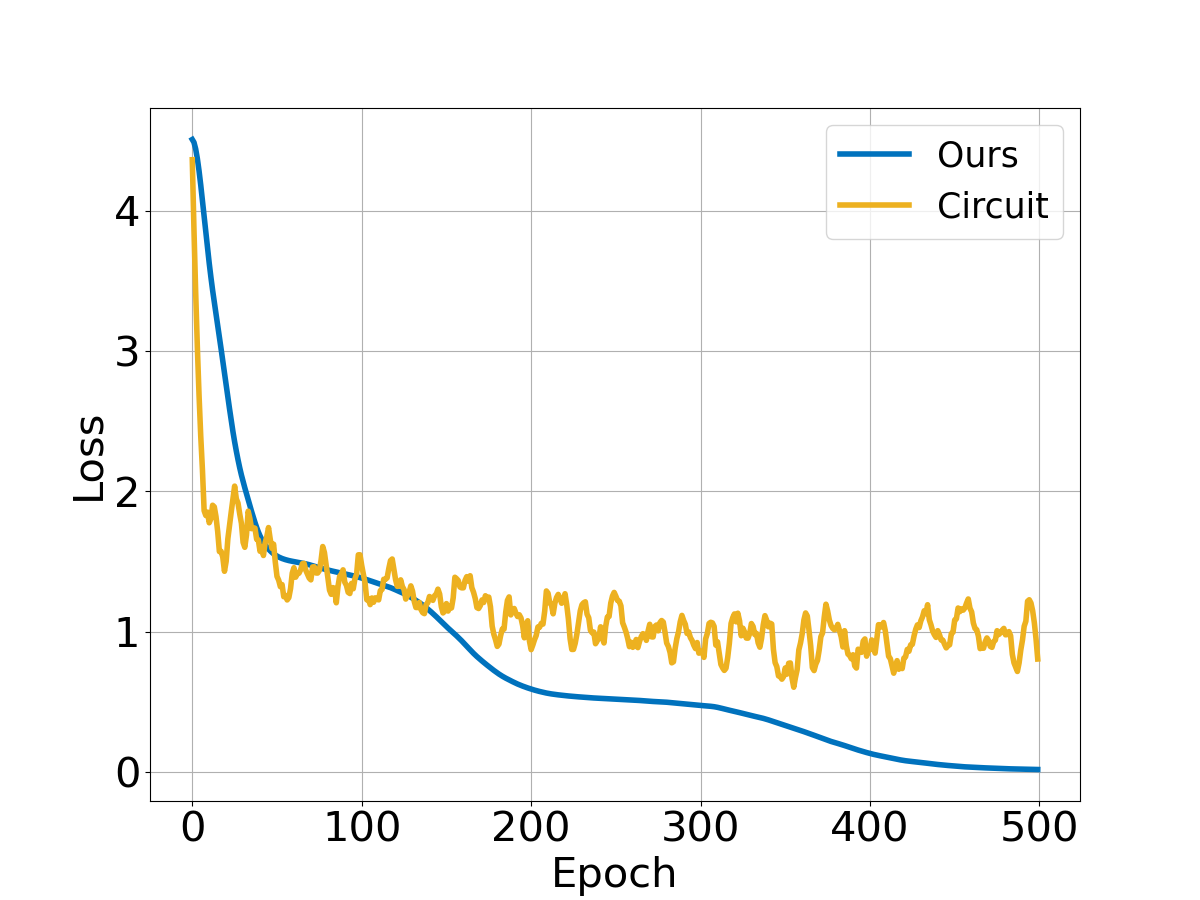} \\
   \small (c) 6 qubits  & \small (d) 9 qubits  & \small (d) 11 qubits
\end{tabular}
\vspace{0em}
\caption{ {\bf Comparison with circuit QAOA in scaling up MaxCut problems.} As the size of graphs increases, the number of qubits increases and ours still converges better than VQAs.}
\label{fig:scale}
\end{figure}

We also conduct larger experiments on cycle graph involving $6, 9, 11$ vertices, as in \fig{scale}. Our method still outperforms circuit QAOA with better convergence rate and lower final loss.

\subsection{Quantum Control}
\label{append:ctrl}

\subsubsection{Experiment details in state preparation}
\label{append:state-prep}
We implement our method, together with three other methods (SLSQP, CAMES, GRAPE) to the two state preparation tasks: (a) to prepare $\ket{+}$ from $\ket{0}$; (b) to prepare $\ket{\Phi^+}$ from $\ket{00}$. 

\mypara{Quantum measurement.}
In both tasks, the quantum observable used in computing the loss function can be expressed as a sum local Pauli operators. This nice property makes the loss function easy to evaluate on the quantum computer. For the state $\ket{+}$, the observable is
\begin{align}
    M = \mathbb{I}-\ket{+}\bra{+}=\frac{1}{2}(\mathbb{I}-X).
\end{align}
Similarly, we compute the observable used in preparing the $\ket{\Phi^+}$ state:
\begin{align}
    M = \mathbb{I}-\ket{\Phi^+}\bra{\Phi^+}=\frac{1}{4}(X_1X_2-Y_1Y_2+Z_1Z_2-3\mathbb{I}).
\end{align}

\mypara{Hyper-parameters in our approach.}
The integration mini-batch size is $b_{int}=1$ in task (a), and $b_{int} = 400$ in task (b). The maximal degree of Legendre polynomial is $d=4$ in both cases. The initialization uses Gaussian distribution centered at $0$. The optimizer is Adam in both tasks, with learning rate $lr = 0.01$ in (a) and $lr = 0.05$ in (b).   

\subsubsection{Experiment details in gate synthesis}
\label{append:gate-syn}
We implement our method, together with three other methods (SLSQP, CAMES, GRAPE) to the two gate synthesis tasks: (a) to synthesize X gate; (b) to synthesize CNOT gate. The matrix representation of the X gate can be found in \eqn{pauli}. The CNOT gate is shown below:
\begin{align}
 \text{CNOT}=\left[\begin{matrix} 1 & 0 & 0 & 0 \\ 0 & 1 & 0 & 0\\ 0 & 0 & 0 & 1\\0 & 0 & 1 & 0\end{matrix}\right].
\end{align}

\mypara{Quantum measurement.}
We identify two sets $\mathcal{S}_{\text{X}}$ and $\mathcal{S}_{\text{CNOT}}$ that completely determine the X gate and \text{CNOT} gate correspondingly:
\begin{align}
\mathcal{S}_{\text{X}} =~& \{(\ket{0},\ket{1}),(\ket{1},\ket{0}),(\ket{+},\ket{+})\}, \\
\mathcal{S}_{\text{CNOT}} =~& \{(\ket{00},\ket{00}), (\ket{01},\ket{01}),(\ket{10},\ket{11}),(\ket{11},\ket{10}),(\ket{++},\ket{++})\}.
\end{align}
In $\mathcal{S}_{\text{X}}$ and $\mathcal{S}_{\text{CNOT}}$, one of the computational basis pair can be safely removed. But for the convenience of training we keep them in our experiments.

\mypara{Hyper-parameters in our approach.}
The integration mini-batch size is $b_{int}=1$ in task (a), and $b_{int} = 400$ in task (b). The maximal degree of Legendre polynomial is $d=4$ in both cases. The initialization uses Gaussian distribution centered at $0$. The optimizer is Adam in both tasks, with learning rate $lr = 0.005$ in (a) and $lr = 0.03$ in (b).

\end{document}